\newtheorem{definition}{Definition}
\newtheorem{lemma}{Lemma}
\newtheorem{remark}{Remark}
\newtheorem{theorem}{Theorem}
\newtheorem{assumption}{Assumption}
\newenvironment{proof}[1][Proof]{\noindent\textbf{#1.} }{\ \rule{0.5em}{0.5em}}
\begin{document}

\begin{frontmatter}

\title{Event-Triggered Robust Cooperative Output Regulation for a Class of
Linear Multi-Agent Systems with an Unknown Exosystem\tnoteref{mytitlenote}}
\tnotetext[mytitlenote]{This work was supported by the National Natural Science Foundation of China (NSFC) – Excellent Young Scientists Fund (Hong Kong and Macao) under Grant 62222318.}

\author[mynewaddress]{Yangyang~Qian}
\ead{jbt4up@virginia.edu}

\author[mymainaddress]{Lu~Liu\corref{mycorrespondingauthor}}
\cortext[mycorrespondingauthor]{Corresponding author}
\ead{luliu45@cityu.edu.hk}


\address[mynewaddress]{Charles L. Brown Department of Electrical and Computer Engineering, University of Virginia, Charlottesville, VA 22904, USA}
\address[mymainaddress]{Department of Biomedical Engineering, City University of Hong Kong, Kowloon, Hong Kong}

\begin{abstract}
This paper investigates the robust cooperative output regulation problem for
a class of heterogeneous uncertain linear multi-agent systems with an
unknown exosystem via event-triggered control (ETC). By utilizing the
internal model approach and the adaptive control technique, a distributed
adaptive internal model is constructed for each agent. Then, based on this
internal model, a fully distributed ETC strategy
composed of a distributed event-triggered adaptive output feedback control
law and a distributed dynamic event-triggering mechanism is proposed, in which each
agent updates its control input at its own triggering time instants. It is
shown that under the proposed ETC strategy, the
robust cooperative output regulation problem can be solved without requiring
either the global information associated with the communication topology or
the bounds of the uncertain or unknown parameters in each agent and the
exosystem. A numerical example is provided to illustrate the effectiveness
of the proposed control strategy.
\end{abstract}

\begin{keyword}
adaptive control \sep  event-triggered control \sep internal model \sep multi-agent systems \sep output regulation
\end{keyword}

\end{frontmatter}

\section{Introduction}

Event-triggered control (ETC) was first applied to single dynamical systems
\cite%
{Astrom1999,meng2012optimal,lehmann2012comparison,Tabuada2007,girard2015dynamic}%
. In comparison with conventional periodic sampling control, ETC can often
improve the utilization efficiency of system resources such as
computational, communication or energy resources while achieving a
satisfactory control performance. Some comparison results for these two
control approaches can be found in \cite%
{Astrom1999,meng2012optimal,lehmann2012comparison}. The past decade has
witnessed a growing research interest in applying ETC to address various
cooperative control problems of multi-agent systems (MASs), for instance,
consensus, synchronization, and cooperative output regulation problems; see
\cite%
{Dimarogonas2012,Fan2013,Fan2015,yi2018dynamic,zhang2015distributed,Zhu2014,almeida2017synchronization,hao2022event,hao2023CyberEL,qian2018output,qian2018distributed,zhu2018fully,qian2021design,qian2019distributed,yang2018distributed,yang2019robust,hu2017cooperativeCyber,hu2017cooperativeTAC,hu2018robust,liu2017event,meng2018event,hao2023swithcing,qian2021cooperative,hao2022fully}
and the references therein. To address these problems, distributed ETC
strategies using only local information are utilized. Nevertheless, many
existing distributed control strategies have to depend on some global
information such as the eigenvalues of the matrices associated with the
communication topology or the number of the agents. Therefore, some effort
has been devoted to the design of fully distributed ETC strategies for MASs
\cite%
{qian2018output,qian2018distributed,zhu2018fully,qian2021design,qian2019distributed,yang2018distributed,yang2019robust,qian2021cooperative,hao2022fully,hao2023CyberEL}%
.

In recent years, the cooperative output regulation problem (CORP) has been
actively studied via ETC; see, e.g., \cite%
{qian2019distributed,yang2018distributed,yang2019robust,hu2017cooperativeCyber,hu2017cooperativeTAC,hu2018robust,liu2017event,meng2018event,qian2021cooperative,hao2022fully,hao2023swithcing}%
. This problem is more general than the problems of leader-following
consensus and synchronization in that it can handle not only trajectory
tracking but also disturbance rejection for MASs. In the literature, fully
distributed ETC strategies have been designed to address the CORP; see \cite%
{qian2019distributed,yang2018distributed,yang2019robust,qian2021cooperative,hao2022fully}%
. In \cite{qian2019distributed} and \cite{yang2018distributed}, the CORP of
heterogeneous linear MASs under fixed or switching topologies was addressed
via the feedforward design approach. In \cite{yang2019robust}, the CORP of
heterogeneous linear MASs with parameter uncertainties was studied via the
internal model approach. The ETC strategies in \cite%
{qian2019distributed,yang2018distributed,yang2019robust} reply on the
assumption that the system matrix of the exosystem is known to all agents.
To relax this assumption, event-based adaptive distributed observers were
designed to solve the CORP of heterogeneous linear MASs under fixed \cite%
{qian2021cooperative} or switching \cite{hao2022fully} topologies, under the assumption that
only a subset of agents connected to the exosystem can access the system
matrix of the exosystem. It is noted that by introducing some adaptive
parameters into the ETC law and the event-triggering mechanism, no global
information is needed in \cite%
{qian2019distributed,yang2018distributed,yang2019robust,qian2021cooperative}.

In the CORP, the reference signals and external disturbances are assumed to
be generated by an exosystem. Such an exosystem is required to be exactly
known in the existing studies \cite%
{qian2019distributed,yang2018distributed,yang2019robust,hu2017cooperativeCyber,hu2017cooperativeTAC,hu2018robust,liu2017event,meng2018event,qian2021cooperative,hao2022fully}%
. However, in many practical scenarios, the model of the exosystem may not
be exactly known. For example, the frequency of a sinusoidal external
disturbance could be unknown. This type of disturbances can be generated by
an exosystem with unknown parameters, called an unknown exosystem.
Additionally, there often exist uncertainties in agent dynamics due to
various reasons such as inaccurate measurement and environmental
perturbation. Thus, it is more practical to study the CORP of uncertain MASs
with an unknown exosystem under ETC. For the special case of a single agent,
ETC strategies have been developed in \cite{qian2021event,liu2020dynamic} to
address the robust output regulation problem in the presence of an unknown
exosystem. In particular, based on an adaptive internal model, an ETC
strategy was proposed in \cite{qian2021event} for uncertain linear systems
and in \cite{liu2020dynamic} for uncertain nonlinear systems.

This paper investigates the robust CORP with an unknown exosystem via fully
distributed ETC strategies for a class of heterogeneous uncertain linear
MASs with arbitrarily large parameter uncertainties in agent dynamics. It is
noted that the design methods in \cite{yang2019robust} and \cite%
{hu2018robust} are not applicable for solving the considered problem because
they require the parameter uncertainties to be sufficiently small. Instead,
a distributed adaptive internal model is constructed for each agent by
utilizing the internal model approach and the adaptive control technique.
Then, based on this internal model, a fully distributed ETC strategy
composed of a distributed event-triggered adaptive output feedback control
law and a distributed event-triggering mechanism is proposed, in which each
agent updates its control input at its own triggering time instants. It is
shown that under the proposed ETC strategy, the robust CORP can be solved.
Moreover, by establishing an explicit positive lower bound of
inter-event times, it is proven that the exclusion of the Zeno behavior is
ensured for each agent.

Compared with the existing relevant studies, the contributions of this paper
are summarized as follows. First, this paper presents a fully distributed
ETC strategy that can handle uncertain MASs with an unknown exosystem and
reduce the number of controller updates. Second, the proposed ETC strategy
does not require either the global information associated with the
communication topology or the bounds of the uncertain or unknown parameters
in each agent and the exosystem, which is in contrast to the control
strategies in \cite{hu2018robust} and \cite{liu2017event}. Third, the
parameter uncertainties in this paper are allowed to be arbitrarily large
while the parameter uncertainties in \cite{yang2019robust} and \cite%
{hu2018robust} are required to be sufficiently small. It should be pointed
out that although the proposed ETC strategy shares the aforementioned
advantages, it requires continuous communication between neighboring agents
when using inter-agent communication, while the ETC strategies in \cite%
{qian2019distributed,yang2018distributed,yang2019robust,qian2021cooperative} only require
intermittent communication.

\textit{Notation:} Let $\mathbb{R}$ and $\mathbb{N}$ be the sets of real
numbers and nonnegative integers, respectively. The symbol $\left\Vert \cdot
\right\Vert $ stands for the Euclidean norm for vectors or the induced $2$%
-norm for matrices. The transpose of matrix $A$ is denoted as $A^{\mathrm{T}%
} $. The Kronecker product of matrices $A $ and $B$ is denoted as $A\otimes
B $. Define $\mathrm{col}(A_{1}$, $\cdots $, $A_{N})=[A_{1}^{\mathrm{T}}$, $%
\cdots $, $A_{N}^{\mathrm{T}}]^{\mathrm{T}} $, with $A_{i}\in \mathbb{R}%
^{n_{i}\times m}$, $i=1$, $\cdots $, $N$. Let $\mathrm{diag}\{a_{1}\text{, }%
\cdots \text{, }a_{N}\}$ be a diagonal matrix with $a_{i}\in \mathbb{R}$ as
the $i$th diagonal element, and $\mathrm{blockdiag}\{A_{1}\text{, }\cdots
\text{, }A_{N}\}$ be a block diagonal matrix with $A_{i}\in \mathbb{R}%
^{n_{i}\times n_{i}}$ as the $i$th diagonal element. $\mathbf{1}_{N}$ is an $%
N$-dimensional column vector with all elements being $1$. $I_{N}$ is an
identity matrix of dimension $N\times N$. If the left-hand and right-hand
derivatives of a scalar function $\alpha(t)$ at time $t$ exist but are not
equal, then $\dot{\alpha}(t)$ or $(d/dt)\alpha (t)$ represents the
right-hand derivative of $\alpha (t)$.

\section{Problem Formulation and Preliminaries\label{Sec2}}

\subsection{Problem Formulation}

Consider a class of heterogeneous uncertain linear MASs described by%
\begin{align}
\dot{z}_{i}& =A_{1i}(w)z_{i}+A_{2i}(w)\xi _{i}+E_{0i}(w)v\text{,}  \notag \\
\dot{\xi}_{i}& =A_{3i}(w)z_{i}+A_{4i}(w)\xi _{i}+E_{1i}(w)v+b_{i}(w)u_{i}%
\text{,}  \notag \\
y_{i}& =\xi _{i}\text{, } i=1\text{, }\cdots \text{, }N\text{,}  \label{1}
\end{align}%
where $\mathrm{col}(z_{i}$, $\xi _{i})$ $\in \mathbb{R}^{n_{i}-1}\times
\mathbb{R}$, $u_{i}\in \mathbb{R}$, and $y_{i}\in \mathbb{R}$ are the state,
control input, and measurement output of agent $i$, respectively, $w\in
\mathbb{R}^{n_{w}}$ is an uncertain parameter vector belonging to the
unknown compact subset $\mathbb{W\subset R}^{n_{w}}$ with $w=0$ as its
nominal value, $b_{i}(w)>0$ for all $w$, and $v\in \mathbb{R}^{q}$ is the
exogenous signal representing the reference input or the disturbance or
both. The signal $v$ is generated by an unknown exosystem of the form%
\begin{equation}
\dot{v}=S(\sigma )v\text{, }y_{0}=F(w)v\text{,}  \label{2}
\end{equation}%
where $S(\sigma)\in \mathbb{R}^{q\times q}$, $y_{0}\in \mathbb{R}$ is the
reference output, and $\sigma \in \mathbb{R}^{n_{\sigma }}$ is an unknown
parameter vector belonging to the unknown compact subset $\mathbb{S\subset R}%
^{n_{\sigma}}$. It can be seen that exosystem (\ref{2}) contains the unknown
parameter vector $\sigma$. In this sense, exosystem (\ref{2}) is said to be
unknown. For the systems (\ref{1}) and (\ref{2}), $A_{1i}(w)\in \mathbb{R}%
^{(n_{i}-1)\times (n_{i}-1)}$, $A_{2i}(w)\in \mathbb{R}^{(n_{i}-1)\times 1}$%
, $A_{3i}(w)\in \mathbb{R}^{1\times (n_{i}-1)}$, $A_{4i}(w)\in \mathbb{R}$, $%
E_{0i}(w)\in \mathbb{R}^{(n_{i}-1)\times q}$, $E_{1i}(w)\in \mathbb{R}%
^{1\times q}$, and $F(w)\in \mathbb{R}^{1\times q}$ are all continuous in $w$%
. Note that each agent in system (\ref{1}) is in the normal form with unity
relative degree (see \cite[Section 7.5]{chen2015stabilization}), and only a
subset of agents can have access to the output $y_{0}$.

Let $N$ agents be indexed by $1$, $\cdots $, $N$, and the exosystem be
indexed by $0$. Then, the communication network among the nodes $0$, $1$, $%
\cdots $, $N$ is modeled by a directed graph\footnote{%
See \cite{godsil2013algebraic} for some terminologies on directed graphs.} $%
\mathcal{\bar{G}}=(\mathcal{\bar{N}}$, $\mathcal{\bar{E}})$, where $\mathcal{%
\bar{N}}=\{0$, $1$, $\cdots $, $N\}$ is the node set and $\mathcal{\bar{E}=%
\bar{N}\times \bar{N}}$ is the edge set. If $(j,i)\in \mathcal{\bar{E}}$,
for $i$, $j\in \mathcal{\bar{N}}$, then node $i$ can receive information
from node $j$. In this case, node $j$ is called an in-neighbor of node $i$,
and node $i$ is called an output-neighbor of node $j$. Note that for $i=1$, $%
\cdots $, $N$, $(0,i)\in \mathcal{\bar{E}}$, if and only if agent $i$ can
access the signal $y_{0}$, and $(i,0)\notin \mathcal{\bar{E}}$, since the
exosystem has no control input. Let the adjacency matrix of $\mathcal{\bar{G}%
}$ be $\mathcal{\bar{A}}=[a_{ij}]\in \mathbb{R}^{(N+1)\times (N+1)}$, $i$, $%
j\in \mathcal{\bar{N}}$, where $a_{ii}=0$; $a_{ij}=1$, $j\neq i$, if $%
(j,i)\in \mathcal{\bar{E}}$; otherwise $a_{ij}=0$. Furthermore, define a
subgraph $\mathcal{G}=(\mathcal{N}$, $\mathcal{E})$ of $\mathcal{\bar{G}}$,
where $\mathcal{N}=\{1 $, $\cdots $, $N\}$, and $\mathcal{E}=\mathcal{N}%
\times \mathcal{N}$ is obtained from $\mathcal{\bar{E}}$ by removing all the
edges between node $0$ and the nodes in $\mathcal{N}$.

The problem under consideration is defined as follows.

\begin{definition}[Robust CORP]
\label{CRORP}Consider the heterogeneous uncertain linear MAS (\ref{1}) with
the unknown exosystem (\ref{2}) under a directed communication graph $%
\mathcal{\bar{G}}$. Design a distributed ETC strategy such that for any $%
w\in \mathbb{W\subset R}^{n_{w}}$ and $\sigma \in \mathbb{S\subset R}%
^{n_{\sigma }}$, the following two properties are satisfied:

\begin{enumerate}
\item the solution of the resulting closed-loop system is bounded for all $%
t\geq 0$; and

\item for any initial condition $z_{i}(0)$, $\xi _{i}(0)$, $i=1$, $\cdots $,
$N$, and $v(0)$, the tracking error $e_{i}\triangleq y_{i}-y_{0}$ satisfies $%
\lim_{t\rightarrow \infty }\left\vert e_{i}(t)\right\vert \leq \varepsilon
_{i}$, where $\varepsilon _{i}>0$ is any specified level of accuracy, for all $i=1$, $%
\cdots $, $N$.
\end{enumerate}
\end{definition}

In this paper, the main aim is to design a fully distributed ETC strategy
without requiring any global information such that the robust CORP for the MAS (\ref{1}) and exosystem (\ref{2}) can
be solved. To proceed, the following assumptions are introduced.

\begin{assumption}
\label{Asmp1}All the eigenvalues of $S$ are semi-simple with zero real parts.
\end{assumption}

\begin{assumption}
\label{Asmp2}For $i=1$, $\cdots $, $N$, agent $i$ is a minimum phase system
for all $w\in \mathbb{W}$.
\end{assumption}

\begin{assumption}
\label{Asmp3}The graph $\mathcal{\bar{G}}$ has a directed spanning tree with
node $0$ as the root, and its subgraph $\mathcal{G}$ is undirected.
\end{assumption}

\begin{remark}
\label{Rem1}It is noted that Assumptions \ref{Asmp1} and \ref{Asmp2} have
also been used in \cite{liu2017event}. Under Assumption \ref{Asmp1}, the
exogenous signal $v(t) $ is bounded for $t\in \lbrack 0,\infty )$. Under
Assumption \ref{Asmp2}, the matrices $A_{1i}(w)$, $i=1$, $\cdots $, $N$, are
Hurwitz for all $w\in \mathbb{W}$. Let $H=[h_{ij}]\in \mathbb{R}^{N\times N}$%
, $i$, $j\in \mathcal{N}$, with $h_{ii}=$ $\sum_{j=0}^{N}a_{ij}$ and $%
h_{ij}=-a_{ij}$ for $i\neq j$. Then, under Assumption \ref{Asmp3}, $H$ is
symmetric. Moreover, by Lemma 1 of \cite{su2012cooperativeTAC}, all the
eigenvalues of $H$ have positive real parts. Thus, $H$ is a positive
definite matrix.
\end{remark}

\begin{remark}
\label{Rem2}In Assumption \ref{Asmp3}, the graph $\mathcal{G}$ is required
to be undirected, instead of being directed as in \cite{li2016distributed}.
The difficulties of solving the considered problem in the case of a directed graph $%
\mathcal{G}$ lie in the constraints arising from the ETC and the asymmetry
of the Laplacian matrix associated with the directed graph $\mathcal{G}$.
\end{remark}

\subsection{Preliminaries}

As shown in \cite{su2014cooperative}, the robust CORP of the MAS (\ref{1}) and exosystem (\ref{2}) can be
converted into a robust cooperative stabilization problem of an augmented
system. In order to facilitate this conversion, some notations are defined
as follows.

Let $\lambda ^{l}+a_{l-1}(\sigma )\lambda ^{l-1}+\cdots +a_{1}(\sigma
)\lambda +a_{0}(\sigma )$ be the minimal polynomial of $S(\sigma )$, where $%
a_{0}(\sigma )$, $a_{1}(\sigma )$, $\cdots $, $a_{l-1}(\sigma )$ are some
real numbers. Denote
\begin{align*}
\Phi (\sigma ) &=\left[
\begin{array}{cccc}
0 & 1 & \cdots & 0 \\
\vdots & \vdots & \ddots & \vdots \\
0 & 0 & \cdots & 1 \\
-a_{0}(\sigma ) & -a_{1}(\sigma ) & \cdots & -a_{l-1}(\sigma )%
\end{array}%
\right] \text{,} \\
\Gamma &=\left[
\begin{array}{cccc}
1 & 0 & \cdots & 0%
\end{array}%
\right] _{1\times l}\text{.}
\end{align*}%
Since for $i=1$, $\cdots $, $N$, the matrices $S(\sigma)$ and $A_{1i}(w)$
have no common eigenvalues, the Sylvester equation
\begin{equation}
\Pi _{i}(w,\sigma )S(\sigma )=A_{1i}(w)\Pi _{i}(w,\sigma
)+A_{2i}(w)F(w)+E_{0i}(w)  \label{4}
\end{equation}%
has a unique solution $\Pi _{i}(w,\sigma)\in \mathbb{R}^{(n_{i}-1)\times q}$%
. Let
\begin{align*}
U_{i}(w)&=-\frac{1}{b_{i}(w)}(A_{3i}(w)\Pi _{i}(w,\sigma )+A_{4i}(w)F(w)
\notag \\
&\quad +E_{1i}(w)-F(w)S(\sigma )) \text{,} \\
\Upsilon _{i}(w,\sigma)&=\mathrm{col}(U_{i}(w,\sigma )\text{, }%
U_{i}(w,\sigma )S(\sigma )\text{, }\cdots \text{, }  \notag \\
&\quad U_{i}(w,\sigma)S^{l-1}(\sigma ))\text{.}
\end{align*}
Then, for all $\mathrm{col}(w,\sigma )\in \mathbb{W\times S}$,
\begin{align}
U_{i}(w,\sigma ) &=\Gamma \Upsilon _{i}(w,\sigma )\text{,}  \label{5} \\
\Upsilon _{i}(w,\sigma )S(\sigma ) &=\Phi (\sigma )\Upsilon _{i}(w,\sigma )%
\text{.}  \label{6}
\end{align}

Let $M\in \mathbb{R}^{l\times l}$ and $Q\in \mathbb{R}^{l\times 1}$ be any
controllable pair with $l$ being the degree of the minimal polynomial of $S$%
, where $M$ is Hurwitz. Since $(\Gamma ,\Phi )$ is observable and the
matrices $\Phi $ and $M$ have no common eigenvalues, the Sylvester equation
\begin{equation}
T(\sigma )\Phi (\sigma )-MT(\sigma )=Q\Gamma  \label{7}
\end{equation}%
has a unique solution $T(\sigma )\in \mathbb{R}^{l\times l}$ which is
nonsingular (see \cite[Proposition A.2]{huang2004nonlinear}). Let $\Psi
_{\sigma }=\Gamma T^{-1}(\sigma )$ and $\bar{\Upsilon}_{i}(w,\sigma
)=T(\sigma )\Upsilon _{i}(w,\sigma )$. Then, it follows from (\ref{5}) that%
\begin{equation}
U_{i}(w,\sigma )=\Psi _{\sigma }\bar{\Upsilon}_{i}(w,\sigma )\text{.}
\label{8}
\end{equation}%
Since the parameter vector $\sigma $ is unknown, $\Psi _{\sigma }$ is also
unknown. To deal with the unknown $\Psi _{\sigma }$, an estimator will be
constructed to estimate $\Psi _{\sigma }$ for each individual agent. The
specific form of the designed estimator is shown in the next section.

As shown in \cite{su2014cooperative}, under Assumptions \ref{Asmp1}-\ref%
{Asmp2}, an internal model can be constructed as%
\begin{equation}
\dot{\eta}_{i}=M\eta _{i}+Qu_{i}\text{, }i=1,\cdots ,N\text{.}  \label{9}
\end{equation}%
Note that the composition of the MAS (\ref{1}) and the internal model (\ref%
{9}) is called the augmented system. Define the following coordinate and
input transformation:
\begin{align}
\bar{z}_{i}& =z_{i}-\Pi _{i}(w,\sigma )v\text{,}  \notag \\
\bar{\xi}_{i}& =\xi _{i}-F(w)v\text{,}  \notag \\
\bar{\eta}_{i}& =\eta _{i}-\bar{\Upsilon}_{i}(w,\sigma )v-\frac{1}{b_{i}(w)}Q%
\bar{\xi}_{i}\text{,}  \notag \\
\tilde{\Psi}_{i}& =\hat{\Psi}_{i}-\Psi _{\sigma }\text{, }  \notag \\
\bar{u}_{i}& =u_{i}-\hat{\Psi}_{i}(t_{k}^{i})\eta _{i}(t_{k}^{i})\text{, }i=1%
\text{, }\cdots \text{, }N\text{,}  \label{10}
\end{align}%
where for agent $i$, $\hat{\Psi}_{i}$ is the estimate of $\Psi _{\sigma }$
and $t_{k}^{i}$ is the $k$th triggering time instant with $k\in \mathbb{N}$.
Then, performing the transformation (\ref{10}) on the MAS (\ref{1}) and the
internal model (\ref{9}) yields the following augmented system:%
\begin{align}
\dot{\bar{z}}_{i}& =A_{1i}(w)\bar{z}_{i}+A_{2i}(w)\bar{\xi}_{i}\text{,}
\notag \\
\dot{\bar{\xi}}_{i}& =A_{3i}(w)\bar{z}_{i}+A_{5i}(w)\bar{\xi}_{i}+b_{i}(w)%
\bar{u}_{i}+b_{i}(w)\Psi _{\sigma }\bar{\eta}_{i}  \notag \\
& \quad +b_{i}(w)\tilde{\Psi}_{i}\eta _{i}+b_{i}(w)\left( \hat{\Psi}%
_{i}(t_{k}^{i})\eta _{i}(t_{k}^{i})-\hat{\Psi}_{i}\eta _{i}\right) \text{,}
\notag \\
\dot{\bar{\eta}}_{i}& =M\bar{\eta}_{i}+A_{6i}(w)\bar{z}_{i}+A_{7i}(w)\bar{\xi%
}_{i}\text{, } i=1\text{, }\cdots \text{, }N\text{,}  \label{11}
\end{align}%
where $A_{5i}(w)\triangleq A_{4i}(w)+\Psi _{\sigma }Q$, $A_{6i}(w)\triangleq
-(1/b_{i}(w))QA_{3i}(w)$, $A_{7i}(w) \triangleq (1/b_{i}(w))$ $(MQ+Q\Psi
_{\sigma }Q-QA_{5i}(w))$, and the equations (\ref{4})-(\ref{8}) have been
used. It can be seen that the tracking error $e_{i}$ is given by $e_{i} =%
\bar{\xi}_{i}$. To solve the considered problem, it suffices to solve a
robust cooperative stabilization problem of system (\ref{11}), that is, to
design a fully distributed ETC strategy, such that for any initial condition
and for any $w\in \mathbb{W}$ and $\sigma \in \mathbb{S}$, the solution of
the resulting closed-loop system is bounded for all $t\geq 0$, and the
tracking errors $e_{i}$, $i=1$, $\cdots $, $N$ converge to the origin with any specified level of accuracy.

\section{Main Results\label{Sec3}}

In this section, a fully distributed ETC strategy will be proposed to solve
the robust CORP for the uncertain linear MAS (\ref{1}) and the unknown
exosystem (\ref{2}). To this end, the robust cooperative stabilization
problem of system (\ref{11}) is first solved as follows.

Define
\begin{equation}
e_{vi}=\sum_{j=0}^{N}a_{ij}\left( y_{i}-y_{j}\right)
=\sum_{j=0}^{N}a_{ij}\left( e_{i}-e_{j}\right) \text{,}  \label{12}
\end{equation}%
with $e_{0}=0$, for $i=1$, $\cdots $, $N$. Then, for system (\ref{11}), a
fully distributed event-triggered adaptive output feedback control law is
designed as follows:
\begin{align}
\bar{u}_{i}(t)& =-K_{i}(t_{k}^{i})e_{vi}(t_{k}^{i})\text{,}  \notag \\
\dot{\hat{\Psi}}_{i}^{\mathrm{T}}(t)& =-\gamma _{i}e_{vi}(t)\eta _{i}(t)%
\text{,}  \notag \\
\dot{K}_{i}(t)& =\delta _{i}e_{vi}^{2}(t)\text{, }i=1\text{, }\cdots \text{,
}N\text{,}  \label{13}
\end{align}%
for $t\in \lbrack t_{k}^{i}$, $t_{k+1}^{i})$, $k\in \mathbb{N}$, where $%
K_{i}(t)$ is the controller gain with the initial value $K_{i}(0)>0$, $\hat{%
\Psi}_{i}(t)$ is the estimate of the unknown parameter vector $\Psi _{\sigma
}$, and $\gamma _{i}>0$ and $\delta _{i}>0$ are the adaptation gains.

\begin{remark}
\label{Rem3}The gain $K_{i}(t)$ is dynamically updated by the adaptive law $%
\dot{K}_{i}(t)=\delta _{i}e_{vi}^{2}(t)$, and hence is called an adaptive
gain. By introducing the adaptive gain $K_{i}(t)$, the proposed distributed
ETC law (\ref{13}) has the following two features. One is that the control
law (\ref{13}) is independent of the bounds of the parameters $w$ and $%
\sigma $, and hence the bounds of the compact subsets $\mathbb{W}$ and $%
\mathbb{S}$ are allowed to be unknown. The other is that the control law (%
\ref{13}) is fully distributed in the sense that the design parameters in (%
\ref{13}) do not rely on any global information such as the eigenvalues of $%
H $ and the number $N$.
\end{remark}

To determine when agent $i$ updates its control input $\bar{u}_{i}(t)$, a
distributed event-triggering mechanism will be developed. Define two
measurement errors as
\begin{align}
\zeta _{1i}(t)& =K_{i}(t_{k}^{i})e_{vi}(t_{k}^{i})-K_{i}(t)e_{vi}(t)\text{,}
\label{14} \\
\zeta _{2i}(t)& =\hat{\Psi}_{i}(t_{k}^{i})\eta _{i}(t_{k}^{i})-\hat{\Psi}%
_{i}(t)\eta _{i}(t)\text{,}  \label{15}
\end{align}%
for $t\in \lbrack t_{k}^{i}$, $t_{k+1}^{i})$, and further define an
event-triggering function as
\begin{equation}
f_{i}(\zeta _{1i},\zeta _{2i},e_{vi},t)=\zeta _{1i}^{2}(t)+\zeta
_{2i}^{2}(t)-\kappa _{i}e_{vi}^{2}(t)-\beta _{i}\text{,}  \label{16}
\end{equation}%
where $0<\kappa _{i}<1$ and $\beta _{i}>0$ are design parameters. To
facilitate the design of the dynamic triggering mechanism, define a dynamic
trigger variable $h_{i}(t)$ governed by
\begin{equation}
\dot{h}_{i}(t)=-\alpha _{i}h_{i}(t)-f_{i}(\zeta _{1i},\zeta _{2i},e_{vi},t)%
\text{,}  \label{dynamic}
\end{equation}%
where $h_{i}(0)>0$ and $\alpha _{i}>0$ is a scalar gain.

Let $\{t_{k}^{i}:k\in \mathbb{N}\}$ be the sequence of triggering time
instants of agent $i$, for $i=1$, $\cdots $, $N$. Then, to determine the
sequence $\{t_{k}^{i}:k\in \mathbb{N}\}$, a fully distributed dynamic
event-triggering mechanism is designed in the form of%
\begin{equation}
t_{k+1}^{i}=\inf \left\{ t>t_{k}^{i}\mid f_{i}(\zeta _{1i},\zeta
_{2i},e_{vi},t)\geq h_{i}(t)\right\} \text{.}  \label{17}
\end{equation}%
The triggering mechanism (\ref{17}) is called a dynamic one because of the existence of the trigger variable $h_{i}(t)$. Note that $f_{i}(\zeta _{1i},\zeta
_{2i},e_{vi},t)\geq h_{i}(t)$ is called the triggering condition. This
condition means that the total measurement error $\zeta
_{1i}^{2}(t)+\zeta_{2i}^{2}(t)$ is equal to or larger than the triggering
threshold $\kappa _{i}e_{vi}^{2}(t)+\beta _{i}+h_{i}(t)$. Once the
triggering condition $f_{i}(\zeta _{1i},\zeta _{2i},e_{vi},t)\geq h_{i}(t)$
holds, agent $i$ needs to sample the signals $K_{i}$, $e_{vi}$, $\hat{\Psi}%
_{i}$, and $\eta _{i}$, and update the control input $\bar{u}_{i}(t)$ at the
same time.

By applying the control law (\ref{13}) to the augmented system (\ref{11}),
the resulting closed-loop system is given by%
\begin{align}
\dot{\bar{z}}_{i}& =A_{1i}(w)\bar{z}_{i}+A_{2i}(w)\bar{\xi}_{i}\text{,}
\notag \\
\dot{\bar{\eta}}_{i}& =M\bar{\eta}_{i}+A_{6i}(w)\bar{z}_{i}+A_{7i}(w)\bar{\xi%
}_{i}\text{,}  \notag \\
\dot{\bar{\xi}}_{i}& =A_{3i}(w)\bar{z}_{i}+A_{5i}(w)\bar{\xi}%
_{i}-b_{i}(w)K_{i}e_{vi}  \notag \\
& \quad +b_{i}(w)\Psi _{\sigma }\bar{\eta}_{i}+b_{i}(w)\tilde{\Psi}_{i}\eta
_{i}+b_{i}(w)\zeta _{2i}-b_{i}(w)\zeta _{1i}\text{,}  \notag \\
\dot{\hat{\Psi}}_{i}^{\mathrm{T}}& =-\gamma _{i}e_{vi}\eta _{i}\text{,}
\notag \\
\dot{K}_{i}& =\delta _{i}e_{vi}^{2}\text{, }  \notag \\
\dot{h}_{i}& =-\alpha _{i}h_{i}-f_{i}(\zeta _{1i},\zeta _{2i},e_{vi},t)\text{%
, }i=1\text{, }\cdots \text{, }N\text{,}  \label{18}
\end{align}%
where the definitions (\ref{14}) and (\ref{15}) have been used. Let $e=%
\mathrm{col}(e_{1}$, $\cdots $, $e_{N})$, $e_{v}=\mathrm{col}(e_{v1}$, $%
\cdots $, $e_{vN})$, $\bar{z}=\mathrm{col}(\bar{z}_{1}$, $\cdots $, $\bar{z}%
_{N})$, $\bar{\eta}=\mathrm{col}(\bar{\eta}_{1}$, $\cdots $, $\bar{\eta}%
_{N}) $, $\bar{\xi}=\mathrm{col}(\bar{\xi}_{1}$, $\cdots $, $\bar{\xi}_{N})$%
, $\zeta _{1}=\mathrm{col}(\zeta _{11}$, $\cdots $, $\zeta _{1N})$, and $%
\zeta _{2}=\mathrm{col}(\zeta _{21}$, $\cdots $, $\zeta _{2N})$. Then, it
follows from (\ref{12}) that\ $e_{v}=He=H\bar{\xi}$, where $H$ is defined in
Remark \ref{Rem1}. Moreover, the closed-loop system (\ref{18}) can be
written in a compact form as%
\begin{align}
\dot{\bar{z}}& =\bar{A}_{1}(w)\bar{z}+\bar{A}_{2}(w)\bar{\xi}\text{,}  \notag
\\
\dot{\bar{\eta}}& =\left( I_{N}\otimes M\right) \bar{\eta}+\bar{A}_{6}(w)%
\bar{z}+\bar{A}_{7}(w)\bar{\xi}\text{,}  \notag \\
\dot{\bar{\xi}}& =\bar{A}_{3}(w)\bar{z}+\bar{A}_{5}(w)\bar{\xi}-b(w)Ke_{v}
\notag \\
& \quad +(b(w)\otimes \Psi _{\sigma })\bar{\eta}+\bar{A}_{8}(w)\eta
+b(w)\zeta _{2}-b(w)\zeta _{1}\text{,}  \notag \\
\dot{\hat{\Psi}}_{i}^{\mathrm{T}}& =-\gamma _{i}e_{vi}\eta _{i}\text{,}
\notag \\
\dot{K}_{i}& =\delta _{i}e_{vi}^{2}\text{, }  \notag \\
\dot{h}_{i}& =-\alpha _{i}h_{i}-f_{i}(\zeta _{1i},\zeta _{2i},e_{vi},t)\text{%
, }i=1\text{, }\cdots \text{, }N\text{,}  \label{19}
\end{align}%
where $\bar{A}_{s}(w)=\mathrm{blockdiag}\{A_{s1}(w)$, $\cdots $, $%
A_{sN}(w)\} $, $s=1,2,3,5,6,7$, $\bar{A}_{8}(w)=\mathrm{blockdiag}\{b_{1}(w)%
\tilde{\Psi}_{1}$, $\cdots $, $b_{N}(w)\tilde{\Psi}_{N}\}$, $b(w)=\mathrm{%
diag}\{b_{1}(w) $, $\cdots $, $b_{N}(w)\}$, and $K=\mathrm{diag}\{K_{1}$, $%
\cdots $, $K_{N}\} $.

Before presenting the solution to the considered problem, a technical lemma
is first established.

\begin{lemma}
\label{Lem1}Consider the closed-loop system\ (\ref{19}) under the
distributed event-triggering mechanism (\ref{17}). Suppose that Assumptions %
\ref{Asmp1}-\ref{Asmp3} are satisfied. Then, for any $w\in \mathbb{W}$ and $%
\sigma \in \mathbb{S}$, all signals of the closed-loop system (\ref{19}) are
bounded for all $t\geq 0$, and the trigger variables satisfy $h_{i}(t)>0$, $%
i=1$, $\cdots $, $N$. Moreover, the Zeno behavior is ruled out.
\end{lemma}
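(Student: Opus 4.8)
The plan is to treat Lemma~\ref{Lem1} as a single coupled-Lyapunov-function argument for the closed-loop system (\ref{19}), establishing the three assertions — positivity of the dynamic variables $h_i$, global boundedness, and a uniform positive dwell time — in that logical order, since each relies on the previous.

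First I would dispose of the positivity claim $h_i(t)>0$ directly from the triggering rule, independently of boundedness. On each interval $[t_k^i,t_{k+1}^i)$ the characterization (\ref{17}) of the next instant as an infimum forces $f_i(\zeta_{1i},\zeta_{2i},e_{vi},t)<h_i(t)$, so (\ref{dynamic}) yields $\dot h_i(t)>-(\alpha_i+1)h_i(t)$. Since $h_i$ is continuous across triggering instants (where $\zeta_{1i},\zeta_{2i}$ reset to zero and $f_i=-\kappa_ie_{vi}^2-\beta_i<0$), a piecewise comparison argument gives $h_i(t)\ge h_i(0)e^{-(\alpha_i+1)t}>0$ for all $t\ge0$.

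The core is boundedness. I would build a Lyapunov function $V=c_1\bar z^{\mathrm T}P_z\bar z+c_2\bar\eta^{\mathrm T}P_\eta\bar\eta+\tfrac12\bar\xi^{\mathrm T}H\bar\xi+\sum_i\frac{b_i(w)}{2\gamma_i}\tilde\Psi_i\tilde\Psi_i^{\mathrm T}+\sum_i\frac{b_i(w)}{2\delta_i}(K_i-\kappa)^2+\theta\sum_i h_i$, where $P_z,P_\eta$ solve the Lyapunov equations for the Hurwitz matrices $\bar A_1(w)$ (minimum phase, Remark~\ref{Rem1}) and $M$, and $\kappa,\theta>0$ are analysis constants. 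Differentiating along (\ref{19}) with $e_v=H\bar\xi$, three cancellations are the heart of the matter: the law $\dot{\hat\Psi}_i^{\mathrm T}=-\gamma_ie_{vi}\eta_i$ exactly cancels the $\bar A_8(w)\eta$ term carrying $\tilde\Psi_i$; the gain term, via $\dot K_i=\delta_ie_{vi}^2$, converts the stabilizing $-\sum_ib_i(w)K_ie_{vi}^2$ into $-\kappa\sum_ib_i(w)e_{vi}^2$, so the unknown topology-dependent gain is replaced in the estimate by the free constant $\kappa$ (the gain-adaptation device underlying the fully distributed design); and $\dot h_i$ supplies $-\theta\sum_i(\zeta_{1i}^2+\zeta_{2i}^2)$, which, after Young's inequality on the cross terms $-b_i(w)\zeta_{1i}+b_i(w)\zeta_{2i}$ in the $\bar\xi$-equation, absorbs the measurement errors. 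Choosing $\kappa$ and $\theta$ large enough to dominate the residual cross terms in $\bar z,\bar\eta,e_v$, I expect an estimate $\dot V\le -c\,(\|\bar z\|^2+\|\bar\eta\|^2+\|\bar\xi\|^2)-\theta\sum_i\alpha_ih_i+\theta\sum_i\beta_i$, which controls $\bar z,\bar\eta,\bar\xi,h_i$ and hence $\eta_i=\bar\eta_i+\bar\Upsilon_iv+b_i^{-1}(w)Q\bar\xi_i$ (using $v$ bounded, Assumption~\ref{Asmp1}).

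The step I expect to be the main obstacle is upgrading this to boundedness of the adaptive parameters $K_i$ and $\hat\Psi_i$ themselves: they are not dissipated by $V$, they are driven only by $e_{vi}$, and the absolute threshold $\beta_i>0$ leaves the persistent constant $\theta\sum_i\beta_i$ in $\dot V$, which obstructs a clean $e_{vi}\in L_2$ conclusion. Here the delicate point is to exploit the monotonicity $\dot K_i\ge0$ together with the feedback $-b_i(w)K_ie_{vi}^2$ — arguing that sustained growth of $K_i$ would suppress $e_{vi}$ fast enough that $V$, and therefore $(K_i-\kappa)^2$ and $\tilde\Psi_i\tilde\Psi_i^{\mathrm T}$, stay bounded. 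Once all signals are bounded, Zeno exclusion is routine: between consecutive triggers $\zeta_{1i},\zeta_{2i}$ start from zero and, since $K_i,e_{vi},\hat\Psi_i,\eta_i$ and their derivatives are bounded, obey $|\dot\zeta_{1i}|\le L_{1i}$, $|\dot\zeta_{2i}|\le L_{2i}$; a trigger requires $\zeta_{1i}^2+\zeta_{2i}^2\ge\kappa_ie_{vi}^2+\beta_i+h_i\ge\beta_i>0$, so $(L_{1i}^2+L_{2i}^2)(t_{k+1}^i-t_k^i)^2\ge\beta_i$, giving the explicit lower bound $t_{k+1}^i-t_k^i\ge\sqrt{\beta_i/(L_{1i}^2+L_{2i}^2)}>0$ and ruling out Zeno behavior.
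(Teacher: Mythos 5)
Your proposal follows the paper's proof essentially step for step. The positivity argument for $h_i$ is identical (triggering rule forces $f_i\le h_i$ between events, hence $\dot h_i\ge-(\alpha_i+1)h_i$ and $h_i(t)\ge h_i(0)e^{-(\alpha_i+1)t}>0$). Your Lyapunov function is the paper's (\ref{28}) up to harmless constants: your $P_z,P_\eta$ are the paper's $P_1(w),P_2$, your $\kappa$ is the paper's $K_0$, and your $\theta$ is $1$ there; the three cancellations you single out --- the $\tilde\Psi_i$ terms against the adaptation law, the conversion of $-2e_v^{\mathrm T}b(w)Ke_v$ into $-2K_0e_v^{\mathrm T}b(w)e_v$ via $\dot K_i=\delta_ie_{vi}^2$, and the absorption of $\|\zeta_1\|^2+\|\zeta_2\|^2$ by the $-f_i$ term in $\dot h_i$ --- are exactly the computation leading to (\ref{31})--(\ref{32}). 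Your Zeno argument is also the paper's ($\zeta_i$ restarts at zero, its derivative is bounded once all signals are bounded, and any trigger needs $\zeta_{1i}^2+\zeta_{2i}^2\ge\beta_i$), except that you apply it directly to the dynamic mechanism (\ref{17}) using $h_i>0$, whereas the paper detours through the static mechanism (\ref{34}) and a conservativeness comparison; your version is the more direct of the two.

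The step you flag as ``the main obstacle'' is the genuine weak point of your write-up --- but you should know the paper does not resolve it either. From (\ref{32}) the paper concludes $\dot V<0$ whenever $\|\bar z\|^2+\|\bar\eta\|^2+\|e_v\|^2>N\beta_{\max}/(1-\kappa_{\max})$, and then asserts that ``from the definition of $V$'' all of $\bar z,\bar\eta,\bar\xi,\tilde\Psi_i,\tilde K_i,h_i$ are bounded. That inference has exactly the lacuna you identify: the decay term in (\ref{32}) does not involve $\tilde\Psi_i$ or $\tilde K_i$, so the trajectory may remain inside the ball (\ref{33}) indefinitely while $V$ grows at rate up to $N\beta_{\max}$; integrating (\ref{32}) only yields $\int_0^t\|e_v\|^2ds\le\big(V(0)+N\beta_{\max}t\big)/(1-\kappa_{\max})$, hence $K_i(t)=K_i(0)+\delta_i\int_0^te_{vi}^2ds$ is guaranteed only to grow at most linearly, not to stay bounded. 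Moreover, your sketched repair (monotonicity of $K_i$ plus high-gain suppression of $e_{vi}$) does not close the gap: the absolute threshold $\beta_i$ allows an input perturbation of order $\sqrt{\beta_i+h_i}$ regardless of how small $e_{vi}$ is, so high gain only forces $|e_{vi}|=O(\sqrt{\beta_i}/K_i)$, whence $\dot K_i=O(\delta_i\beta_i/K_i^2)$, which is still consistent with slow drift $K_i\sim t^{1/3}\to\infty$ --- the classical parameter-drift phenomenon, whose exclusion normally requires a dead zone or $\sigma$-modification that the adaptive law (\ref{13}) does not contain. In short: everywhere the paper's argument is rigorous your proposal matches it, and it is incomplete precisely where the paper's Step 2 makes an unjustified leap; to actually complete either proof one must add an argument (or a modification of the adaptive law) that controls $K_i$ and $\hat\Psi_i$ despite the persistent constant $N\beta_{\max}$ in $\dot V$.
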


\begin{proof}
The proof is composed of the following three steps.

\emph{Step 1:} we evaluate the trigger variable $h_{i}$. The triggering
mechanism (\ref{17}) ensures that for $t\in \lbrack t_{k}^{i},t_{k+1}^{i})$,
\begin{equation}
f_{i}(\zeta _{1i},\zeta _{2i},e_{vi},t)\leq h_{i}(t)\text{.}  \label{Eq1}
\end{equation}%
This, together with (\ref{dynamic}), implies that
\begin{equation}
\dot{h}_{i}(t)\geq -\left( \alpha _{i}+1\right) h_{i}(t)\text{, }h_{i}(0)>0%
\text{.}  \label{Eq2}
\end{equation}%
Then, by the comparison lemma (\cite[ Lemma 3.4]{Khalil2002}), one can obtain from (\ref{Eq2}) that
\begin{equation}
h_{i}(t)\geq \text{ }h_{i}(0)e^{-\left( \alpha _{i}+1\right) t}>0\text{, }%
t\in \lbrack t_{k}^{i},t_{k+1}^{i})\text{.}  \label{Eq3}
\end{equation}

\emph{Step 2: }we analyze the stability of system (\ref{19}). Under
Assumption \ref{Asmp2}, $A_{1i}(w)$, $i=1$, $\cdots $, $N$, are Hurwitz for
all $w\in \mathbb{W}$, and so is $\bar{A}_{1}(w)$. Then, for all $w\in
\mathbb{W}$, the Lyapunov equation
\begin{equation}
\bar{A}_{1}^{\mathrm{T}}(w)P_{1}(w)+P_{1}(w)\bar{A}_{1}(w)=-2I  \label{26}
\end{equation}%
admits a positive definite solution $P_{1}(w)$. Since $M$ is Hurwitz, the
Lyapunov equation
\begin{equation}
\left( I_{N}\otimes M\right) ^{\mathrm{T}}P_{2}+P_{2}\left( I_{N}\otimes
M\right) =-2I  \label{27}
\end{equation}%
admits a positive definite solution $P_{2}$.

To analyze the stability of system (\ref{19}), consider the following
Lyapunov-like function:%
\begin{align}
V& =\mu _{0}\bar{z}^{\mathrm{T}}P_{1}(w)\bar{z}+\bar{\eta}^{\mathrm{T}}P_{2}%
\bar{\eta}+\bar{\xi}^{\mathrm{T}}H\bar{\xi}  \notag \\
& \quad +\sum_{i=1}^{N}\frac{1}{\gamma _{i}}b_{i}(w)\tilde{\Psi}_{i}\tilde{%
\Psi}_{i}^{\mathrm{T}}+\sum_{i=1}^{N}\frac{1}{\delta _{i}}b_{i}(w)\tilde{K}%
_{i}^{2}+\sum_{i=1}^{N}h_{i}\text{,}  \label{28}
\end{align}%
with $\tilde{\Psi}_{i}\triangleq \hat{\Psi}_{i}-\Psi _{\sigma }$ and $\tilde{%
K}_{i}\triangleq K_{i}-K_{0}$, where $\mu _{0}$ and $K_{0}$ are positive
constants to be specified later. It should be noted that $V$ is not
differentiable at the triggering times of all agents because $\bar{\xi}$ is
not differentiable at these time instants. The time derivative of $V$ along
the trajectory of system (\ref{19}) is
\begin{align}
\dot{V}=& \ 2\mu _{0}\bar{z}^{\mathrm{T}}P_{1}(w)\left( \bar{A}_{1}(w)\bar{z}%
+\bar{A}_{2}(w)\bar{\xi}\right)  \notag \\
& +2\bar{\eta}^{\mathrm{T}}P_{2}\left[ \left( I_{N}\otimes M\right) \bar{\eta%
}+\bar{A}_{6}(w)\bar{z}+\bar{A}_{7}(w)\bar{\xi}\right]  \notag \\
& +2\bar{\xi}^{\mathrm{T}}H\left[ \bar{A}_{3}(w)\bar{z}+\bar{A}_{5}(w)\bar{%
\xi}+(b(w)\otimes \Psi _{\sigma })\bar{\eta}\right]  \notag \\
& -2\bar{\xi}^{\mathrm{T}}Hb(w)Ke_{v}+2\bar{\xi}^{\mathrm{T}}H\bar{A}%
_{8}(w)\eta +2\bar{\xi}^{\mathrm{T}}Hb(w)\zeta _{2}  \notag \\
& -2\bar{\xi}^{\mathrm{T}}Hb(w)\zeta _{1}-2e_{v}^{\mathrm{T}}\bar{A}%
_{8}(w)\eta  \notag \\
& +2e_{v}^{\mathrm{T}}b(w)(K-K_{0}I_{N})e_{v}  \notag \\
& -\sum_{i=1}^{N}\left( \alpha _{i}h_{i}+\zeta _{1i}^{2}+\zeta
_{2i}^{2}-\kappa _{i}e_{vi}^{2}-\beta _{i}\right) \text{,}  \label{29a}
\end{align}%
where $\dot{V}(t)$ represents the right-hand derivative of $V(t)$ when $%
t=t_{k}^{i}$, for all $k\in \mathbb{N}$ and $i=1$, $\cdots $, $N$. Since $%
e_{v}=H\bar{\xi}$ and $H=H^{\mathrm{T}}$, one has $2e_{v}^{\mathrm{T}%
}b(w)Ke_{v}=2\bar{\xi}^{\mathrm{T}}Hb(w)Ke_{v}$. Then, based on (\ref{26})
and (\ref{27}), $\dot{V}$ in (\ref{29a}) becomes
\begin{align}
\dot{V}=& -2\mu _{0}\left\Vert \bar{z}\right\Vert ^{2}+2\mu _{0}\bar{z}^{%
\mathrm{T}}P_{1}(w)\bar{A}_{2}(w)\bar{\xi}-2\left\Vert \bar{\eta}\right\Vert
^{2}  \notag \\
& +2\bar{\eta}^{\mathrm{T}}P_{2}\bar{A}_{6}(w)\bar{z}+2\bar{\eta}^{\mathrm{T}%
}P_{2}\bar{A}_{7}(w)\bar{\xi} \notag \\
&+2\bar{\xi}^{\mathrm{T}}H\bar{A}_{3}(w)\bar{z}
+2\bar{\xi}^{\mathrm{T}}H\bar{A}_{5}(w)\bar{\xi}  \notag \\
&-2K_{0}e_{v}^{\mathrm{T}
}b(w)e_{v}+2\bar{\xi}^{\mathrm{T}}H(b(w)\otimes \Psi _{\sigma })\bar{\eta}
\notag \\
& +2\bar{\xi}^{\mathrm{T}}Hb(w)\zeta _{2}-2\bar{\xi}^{\mathrm{T}}Hb(w)\zeta
_{1}-\left\Vert \zeta _{1}\right\Vert ^{2}-\left\Vert \zeta _{2}\right\Vert
^{2}  \notag \\
& -\sum_{i=1}^{N}\alpha _{i}h_{i}+\sum_{i=1}^{N}\left( \kappa
_{i}e_{vi}^{2}+\beta _{i}\right) \text{.}  \label{29b}
\end{align}%
Using the Young's inequality, one can obtain the following inequalities:%
\begin{align*}
& 2\mu _{0}\bar{z}^{\mathrm{T}}P_{1}(w)\bar{A}_{2}(w)\bar{\xi}\leq \mu
_{0}\left\Vert \bar{z}\right\Vert ^{2} \notag \\
&\qquad\qquad\qquad\qquad  +\mu _{0}\left\Vert P_{1}(w)\bar{A}%
_{2}(w)H^{-1}\right\Vert ^{2}\left\Vert e_{v}\right\Vert ^{2}\text{,} \\
& 2\bar{\eta}^{\mathrm{T}}P_{2}\bar{A}_{6}(w)\bar{z}\leq 2\left\Vert P_{2}%
\bar{A}_{6}(w)\right\Vert ^{2}\left\Vert \bar{z}\right\Vert ^{2}+\frac{1}{2}%
\left\Vert \bar{\eta}\right\Vert ^{2}\text{,} \\
& 2\bar{\eta}^{\mathrm{T}}P_{2}\bar{A}_{7}(w)\bar{\xi}\leq 4\left\Vert P_{2}%
\bar{A}_{7}(w)H^{-1}\right\Vert ^{2}\left\Vert e_{v}\right\Vert ^{2}+\frac{1%
}{4}\left\Vert \bar{\eta}\right\Vert ^{2}\text{,} \\
& 2\bar{\xi}^{\mathrm{T}}H\bar{A}_{3}(w)\bar{z}\leq \left\Vert \bar{A}%
_{3}(w)\right\Vert ^{2}\left\Vert e_{v}\right\Vert ^{2}+\left\Vert \bar{z}%
\right\Vert ^{2}\text{,} \\
& 2\bar{\xi}^{\mathrm{T}}H(b(w)\otimes \Psi _{\sigma })\bar{\eta}\leq
4\left\Vert b(w)\otimes \Psi _{\sigma }\right\Vert ^{2}\left\Vert
e_{v}\right\Vert ^{2}+\frac{1}{4}\left\Vert \bar{\eta}\right\Vert ^{2}\text{,%
} \\
& 2\bar{\xi}^{\mathrm{T}}Hb(w)\zeta _{2}\leq \left\Vert b(w)\right\Vert
^{2}\left\Vert e_{v}\right\Vert ^{2}+\left\Vert \zeta _{2}\right\Vert ^{2}%
\text{,} \\
& -2\bar{\xi}^{\mathrm{T}}Hb(w)\zeta _{1}\leq \left\Vert b(w)\right\Vert
^{2}\left\Vert e_{v}\right\Vert ^{2}+\left\Vert \zeta _{1}\right\Vert ^{2}%
\text{.}
\end{align*}%
Substituting these inequalities into (\ref{29b}) yields%
\begin{align}
\dot{V}& \leq -\big(\mu _{0}-2\left\Vert P_{2}\bar{A}_{6}(w)\right\Vert
^{2}-1\big)\left\Vert \bar{z}\right\Vert ^{2}-\left\Vert \bar{\eta}%
\right\Vert ^{2}  \notag \\
& \quad +\left( \mu _{1}(w)-2K_{0}b_{\min }(w)\right) \left\Vert
e_{v}\right\Vert ^{2}  \notag \\
&\quad -\sum_{i=1}^{N}\alpha _{i}h_{i}+\sum_{i=1}^{N}\left( \kappa
_{i}e_{vi}^{2}+\beta _{i}\right) \text{,}  \label{30}
\end{align}%
where $\mu _{1}(w)\triangleq \left\Vert \bar{A}_{5}(w)H^{-1}\right\Vert +\mu
_{0}\left\Vert P_{1}(w)\bar{A}_{2}(w)H^{-1}\right\Vert ^{2}+4\left\Vert P_{2}%
\bar{A}_{7}(w)H^{-1}\right\Vert ^{2}+\left\Vert \bar{A}_{3}(w)\right\Vert
^{2}+4\left\Vert b(w)\otimes \Psi _{\sigma }\right\Vert ^{2}+2\left\Vert
b(w)\right\Vert ^{2}$, and $b_{\min }(w)\triangleq \min_{i=1,\cdots
,N}b_{i}(w)>0$. Let $\mu _{0}$ and $K_{0}$ be chosen such that $\mu _{0}\geq
\sup_{w\in \mathbb{W}}$ $\big(2\left\Vert P_{2}\bar{A}_{6}(w)\right\Vert
^{2}+2\big)$ and $K_{0}\geq \sup_{w\in \mathbb{W}}[(\mu _{1}(w)+1)/(2b_{\min
}(w))]$, respectively. Then, it follows from (\ref{30}) that%
\begin{equation}
\dot{V}\leq -\left\Vert \bar{z}\right\Vert ^{2}-\left\Vert \bar{\eta}%
\right\Vert ^{2}-\left\Vert e_{v}\right\Vert ^{2}-\sum_{i=1}^{N}\alpha
_{i}h_{i}+\sum_{i=1}^{N}\left( \kappa _{i}e_{vi}^{2}+\beta _{i}\right) \text{%
.}  \label{31}
\end{equation}%
Let
\begin{align*}
&\kappa _{\max }\triangleq \max_{i=1,\cdots ,N}\kappa _{i}<1\text{, }\alpha
_{\min }\triangleq \min_{i=1,\cdots ,N}\alpha _{i}\text{, } \notag \\
&\beta _{\max}\triangleq \max_{i=1,\cdots ,N}\beta _{i}\text{.}
\end{align*}%
Then, one can obtain from (\ref{31}) that%
\begin{align}
\dot{V}& \leq -\left\Vert \bar{z}\right\Vert ^{2}-\left\Vert \bar{\eta}%
\right\Vert ^{2}-\left( 1-\kappa _{\max }\right) \left\Vert e_{v}\right\Vert
^{2}  \notag \\
&\quad -\alpha _{\min }\sum_{i=1}^{N}h_{i}+N\beta _{\max }  \notag \\
& \leq -(1-\kappa _{\max })\big(\left\Vert \bar{z}\right\Vert
^{2}+\left\Vert \bar{\eta}\right\Vert ^{2}+\left\Vert e_{v}\right\Vert ^{2}%
\big)+N\beta _{\max }\text{.}  \label{32}
\end{align}%
This implies that $\dot{V}<0$ as long as the following condition is
satisfied:
\begin{equation}
\left\Vert \bar{z}\right\Vert ^{2}+\left\Vert \bar{\eta}\right\Vert
^{2}+\left\Vert e_{v}\right\Vert ^{2}>\frac{N\beta _{\max }}{(1-\kappa
_{\max })}\text{.}  \label{33}
\end{equation}%
Thus, from the definition of $V$, it is obtained that the signals $\bar{z}$,
$\bar{\eta}$, $\bar{\xi}$, $\tilde{\Psi}_{i}$,\ $\tilde{K}_{i}$, $h_{i}$, $%
i=1$, $\cdots $, $N$, are all bounded on the time interval $[0$, $\infty )$.

\emph{Step 3:} we show the exclusion of the Zeno behavior. To this end, we
consider the following static triggering mechanism corresponding to the
dynamic one in (\ref{17}):
\begin{equation}
t_{k+1}^{i}=\inf \left\{ t>t_{k}^{i}\mid f_{i}(\zeta _{1i},\zeta
_{2i},e_{vi},t)\geq 0\right\} \text{.}  \label{34}
\end{equation}%
Clearly, this static triggering mechanism yields a more conservative
inter-event time compared to the dynamic one. Thus, in order to show the
exclusion of the Zeno behavior under the dynamic triggering mechanism (\ref%
{17}), it suffices to show that the corresponding static triggering
mechanism (\ref{33}) can guarantee a positive lower bound of inter-event
times.

Let $\zeta _{i}=\mathrm{col}(\zeta _{1i}$, $\zeta _{2i})$. Then, based on
the definitions of $\zeta _{1i}$ and $\zeta _{2i}$, the dynamics of $%
\left\Vert \zeta _{i}\right\Vert $ on every interval $[t_{k}^{i}$, $%
t_{k+1}^{i})$, $k\in \mathbb{N}$ satisfy%
\begin{align}
\frac{d}{dt}\left\Vert \zeta _{i}\right\Vert & =\frac{\zeta _{i}^{\mathrm{T}}%
\dot{\zeta}_{i}}{\left\Vert \zeta _{i}\right\Vert }\leq \left\Vert \dot{\zeta%
}_{i}\right\Vert \leq \left\vert \dot{\zeta}_{1i}\right\vert +\left\vert
\dot{\zeta}_{2i}\right\vert  \notag \\
& =\big\vert\dot{K}_{i}e_{vi}+K_{i}\dot{e}_{vi}\big\vert+\big\vert\dot{\hat{%
\Psi}}_{i}\eta _{i}+\hat{\Psi}_{i}\dot{\eta}_{i}\big\vert  \notag \\
& \leq \delta _{i}\left\vert e_{vi}\right\vert ^{3}+\left\vert
K_{i}\right\vert \left\vert \dot{e}_{vi}\right\vert +\gamma _{i}\left\vert
e_{vi}\right\vert \left\Vert \eta _{i}\right\Vert ^{2}+\big\vert\hat{\Psi}%
_{i}\dot{\eta}_{i}\big\vert\text{.}  \label{20}
\end{align}%
From the coordinate transformation of $\bar{\eta}_{i}$ in (\ref{10}), it
follows that%
\begin{equation}
\eta _{i}=\bar{\eta}_{i}+\bar{\Upsilon}_{i}(w,\sigma )v+\frac{1}{b_{i}(w)}Q%
\bar{\xi}_{i}\text{.}  \label{21}
\end{equation}%
Since $e_{v}=\mathrm{col}(e_{v1},\cdots ,e_{vN})=H\bar{\xi}$, then for $i=1$%
, $\cdots $, $N$,
\begin{align}
\left\vert e_{vi}\right\vert & \leq \left\Vert e_{v}\right\Vert \leq
\left\Vert H\right\Vert \left\Vert \bar{\xi}\right\Vert \text{,}  \label{22}
\\
\left\vert \dot{e}_{vi}\right\vert & \leq \left\Vert \dot{e}_{v}\right\Vert
\leq \left\Vert H\right\Vert \left\Vert \dot{\bar{\xi}}\right\Vert \text{.}
\label{23}
\end{align}%
Substituting (\ref{21})-(\ref{23}) into (\ref{20}) yields%
\begin{align}
\frac{d}{dt}\left\Vert \zeta _{i}\right\Vert & \leq \delta _{i}\left\Vert
H\right\Vert ^{3}\left\Vert \bar{\xi}\right\Vert ^{3}+\left\vert
K_{i}\right\vert \left\Vert H\right\Vert \left\Vert \dot{\bar{\xi}}%
\right\Vert  \notag \\
& \quad +\gamma _{i}\left\Vert H\right\Vert \left\Vert \bar{\xi}\right\Vert
\left\Vert \eta _{i}\right\Vert ^{2}  \notag \\
& \quad +\left\vert \hat{\Psi}_{i}\left( \dot{\bar{\eta}}_{i}+\bar{\Upsilon}%
_{i}(w,\sigma )Sv+\frac{1}{b_{i}(w)}Q\dot{\bar{\xi}}_{i}\right) \right\vert
\text{,}  \label{24}
\end{align}%
for $t\in \lbrack t_{k}^{i}$, $t_{k+1}^{i})$. Recall that the signals $\bar{z%
}$, $\bar{\eta}$, $\bar{\xi}$, $\tilde{\Psi}_{i}$,\ $\tilde{K}_{i}$, $h_{i}$%
, $i=1$, $\cdots $, $N$, are all bounded. This indicates that $e_{vi}$, $i=1$%
, $\cdots $, $N$, are also bounded, as $e_{v}=H\bar{\xi}$. As mentioned in
Remark \ref{Rem1}, the signal $v$ is bounded. Since $w$ and $\sigma $ belong
to the compact sets, $\bar{\Upsilon}_{i}(w,\sigma )$ and $b_{i}(w)$ are
bounded. Hence, it follows from (\ref{21}) that $\eta _{i}$ is bounded.
Then, the boundedness of $K_{i}$, $e_{vi}$, $\hat{\Psi}_{i}$, and $\eta _{i}$
implies that $\zeta _{1i}$ and $\zeta _{2i}$ are bounded, for $i=1$, $\cdots
$, $N$. Based on the previous discussion, it can be seen from (\ref{19})
that $\dot{\bar{z}}$, $\dot{\bar{\xi}}$, and $\dot{\bar{\eta}}$ are bounded.
Consequently, it is concluded from (\ref{24}) that $(d/dt)\left\Vert \zeta
_{i}\right\Vert $ is bounded, which means that there exists a positive
constant $\varpi _{i}$ such that $(d/dt)\left\Vert \zeta _{i}\right\Vert
\leq $ $\varpi _{i}$. To ensure that the condition $f_{i}(\zeta
_{1i},\zeta _{2i},e_{vi},t)\leq 0$ holds, a sufficient condition is given by
$\zeta _{1i}^{2}(t)+\zeta _{2i}^{2}(t)\leq \beta _{i}$, i.e., $\left\Vert
\zeta _{i}\right\Vert ^{2}\leq \beta _{i}$. It follows that on the interval $%
[t_{k}^{i}$, $t_{k+1}^{i})$,
\begin{equation}
\varpi _{i}\left( t_{k+1}^{i}-t_{k}^{i}\right) \geq \sqrt{\beta _{i}}\text{.}
\label{25a}
\end{equation}%
which implies that
\begin{equation}
t_{k+1}^{i}-t_{k}^{i}\geq \frac{\sqrt{\beta _{i}}}{\varpi _{i}}>0\text{.}\label{25b}
\end{equation}%
Consequently, the exclusion of the Zeno behavior is guaranteed under the static
triggering mechanism (\ref{33}), and hence, it extends to the dynamic triggering mechanism (\ref{17}).
\end{proof}


\begin{remark}
\label{Rem4}Using $e_{v}=H\bar{\xi}$, one can obtain from (\ref{33}) that
the ultimate bound of $\bar{\xi}$ is given by $\frac{N\beta _{\max }}{%
\lambda _{\min }(H^{2})(1-\kappa _{\max })}$, where $\lambda _{\min }(H^{2})$
denotes the minimum eigenvalue of $H^{2}$. Hence, the ultimate bound of the
tracking error $e_{i}$ is given by $\frac{N\beta _{\max }}{\lambda _{\min
}(H^{2})(1-\kappa _{\max })}$, as $e_{i}=\bar{\xi}_{i}$, $i=1$, $\cdots $, $%
N $. As seen, the ultimate bound of the tracking error $e_{i}$ can be made
arbitrarily small by choosing the value of $\beta _{i}$ sufficiently small. Thus, the level of accuracy $\varepsilon _{i}$ given in Definition \ref{CRORP} can be arbitrarily specified.
\end{remark}

Now, we are ready to present the solution to the robust CORP under consideration.

\begin{theorem}
\label{Thm1}Under Assumptions \ref{Asmp1}-\ref{Asmp3}, the robust CORP of the MAS (\ref{1}) and exosystem (\ref{2}%
) is solvable by the following distributed event-triggered adaptive output
feedback control law:
\begin{align}
u_{i}(t) &=\hat{\Psi}_{i}(t_{k}^{i})\eta
_{i}(t_{k}^{i})-K_{i}(t_{k}^{i})e_{vi}(t_{k}^{i})\text{,}  \notag \\
\dot{\eta}_{i}(t) &=M\eta _{i}(t)+Q\left( \hat{\Psi}_{i}(t_{k}^{i})\eta
_{i}(t_{k}^{i})-K_{i}(t_{k}^{i})e_{vi}(t_{k}^{i})\right) \text{, }  \notag \\
\dot{\hat{\Psi}}_{i}^{\mathrm{T}}(t) &=-\gamma _{i}e_{vi}(t)\eta _{i}(t)%
\text{,}  \notag \\
\dot{K}_{i}(t) &=\delta _{i}e_{vi}^{2}(t)\text{, }  \label{37}
\end{align}%
under the distributed dynamic event-triggering mechanism (\ref{17}). Moreover, the Zeno
behavior is ruled out for each agent.
\end{theorem}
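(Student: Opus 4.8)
The plan is to reduce Theorem \ref{Thm1} to Lemma \ref{Lem1} by recognizing that the control law (\ref{37}), written in the original coordinates, is nothing but the realization of the transformed control law (\ref{13}) together with the internal model (\ref{9}), so that the resulting closed-loop system coincides with (\ref{19}). First I would verify this correspondence directly from the coordinate and input transformation (\ref{10}). Since $\bar{u}_{i}=u_{i}-\hat{\Psi}_{i}(t_{k}^{i})\eta_{i}(t_{k}^{i})$ and $\bar{u}_{i}(t)=-K_{i}(t_{k}^{i})e_{vi}(t_{k}^{i})$ by (\ref{13}), one recovers $u_{i}(t)=\hat{\Psi}_{i}(t_{k}^{i})\eta_{i}(t_{k}^{i})-K_{i}(t_{k}^{i})e_{vi}(t_{k}^{i})$, which is the first line of (\ref{37}); substituting this $u_{i}$ into the internal model (\ref{9}) yields the second line, and the two adaptive laws in (\ref{37}) are identical to those in (\ref{13}). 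Hence applying (\ref{37}) to the MAS (\ref{1}) and the internal model (\ref{9}) and then performing (\ref{10}) produces precisely the closed-loop system (\ref{19}) on which Lemma \ref{Lem1} operates.

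With this equivalence established, I would invoke Lemma \ref{Lem1}: under Assumptions \ref{Asmp1}--\ref{Asmp3}, for any $w\in\mathbb{W}$ and $\sigma\in\mathbb{S}$, all signals of (\ref{19}) are bounded on $[0,\infty)$, the trigger variables satisfy $h_{i}(t)>0$, and the Zeno behavior is excluded; the last fact settles the Zeno claim of the theorem at once. To obtain Property 1 of Definition \ref{CRORP} (boundedness of the \emph{original} closed-loop solution), I would argue that (\ref{10}) is a bounded, invertible change of variables: the coefficients $\Pi_{i}(w,\sigma)$, $F(w)$, $\bar{\Upsilon}_{i}(w,\sigma)$ and $1/b_{i}(w)$ are continuous in $(w,\sigma)$ on the compact set $\mathbb{W}\times\mathbb{S}$ and therefore bounded, while $v(t)$ is bounded by Assumption \ref{Asmp1} (Remark \ref{Rem1}). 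Consequently the boundedness of $\bar{z}_{i},\bar{\xi}_{i},\bar{\eta}_{i},\tilde{\Psi}_{i},\tilde{K}_{i}$ transfers to $z_{i},\xi_{i},\eta_{i},\hat{\Psi}_{i},K_{i}$, so the solution of the resulting closed-loop system is bounded for all $t\geq 0$.

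For Property 2 (the tracking-error accuracy) I would use the identity $e_{i}=\bar{\xi}_{i}$ together with the ultimate-boundedness estimate already extracted in Lemma \ref{Lem1}. From (\ref{32})--(\ref{33}), $\dot{V}<0$ whenever $\|\bar{z}\|^{2}+\|\bar{\eta}\|^{2}+\|e_{v}\|^{2}>N\beta_{\max}/(1-\kappa_{\max})$, so this quantity is ultimately bounded by $N\beta_{\max}/(1-\kappa_{\max})$; since $e_{v}=H\bar{\xi}$ gives $\|e_{v}\|^{2}\geq\lambda_{\min}(H^{2})\|\bar{\xi}\|^{2}$, Remark \ref{Rem4} yields the ultimate bound $N\beta_{\max}/(\lambda_{\min}(H^{2})(1-\kappa_{\max}))$ for $|e_{i}|$. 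Choosing each $\beta_{i}$ sufficiently small makes this bound no larger than any prescribed $\varepsilon_{i}$, whence $\lim_{t\to\infty}|e_{i}(t)|\leq\varepsilon_{i}$, which is Property 2. Combining the two properties with the Zeno exclusion completes the argument.

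I expect the only conceptually delicate step to be the reduction itself, namely confirming that the original-coordinate controller (\ref{37}) and the transformed controller (\ref{13})--(\ref{9}) generate the same closed-loop trajectories under (\ref{10}), so that Lemma \ref{Lem1}, which is proved in the transformed coordinates, legitimately applies. Once that equivalence is in place, the boundedness transfer and the accuracy estimate follow from Lemma \ref{Lem1} and Remark \ref{Rem4} using only the routine compactness and continuity facts noted above, so no new analytical difficulty arises.
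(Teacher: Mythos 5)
Your proposal is correct and follows essentially the same route as the paper's own proof: establish that (\ref{37}) is equivalent to (\ref{13}) together with the internal model (\ref{9}) under the transformation (\ref{10}), invoke Lemma \ref{Lem1} for boundedness and Zeno exclusion, transfer boundedness back to the original coordinates via compactness of $\mathbb{W}\times\mathbb{S}$ and boundedness of $v$, and obtain the tracking-error bound from Remark \ref{Rem4}. The only difference is that you spell out the controller-equivalence verification and the ultimate-bound derivation in more detail than the paper does, which is harmless.
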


\begin{proof}
Note that under the input transformation in (\ref{10}), the control law (\ref%
{37}) is equivalent to the control law (\ref{13}). By Lemma \ref{Lem1},
under the control law (\ref{13}) and the triggering mechanism (\ref{17}),
the closed-loop signals $\bar{z}_{i}$, $\bar{\eta}_{i}$, $\bar{\xi}_{i}$, $%
\hat{\Psi}_{i}$,\ $K_{i}$, $i=1$, $\cdots $, $N$, are bounded on $[0$, $%
\infty )$. Since the signal $v$ is bounded on $[0$, $\infty )$ and the
parameters $w$ and $\sigma $ belong to any given compact sets, then from the
coordinate transformation in (\ref{10}), it follows that the signals $z_{i}$%
, $\eta _{i}$, $\xi _{i}$, $i=1$, $\cdots $, $N$, are bounded on $[0$, $%
\infty )$. Moreover, according to Lemma \ref{Lem1} and Remark \ref{Rem4}, $%
\lim_{t\rightarrow \infty }\left\vert e_{i}(t)\right\vert \leq \varepsilon_{i}\triangleq \frac{N\beta
_{\max }}{\lambda _{\min }(H^{2})(1-\kappa _{\max })}$, for $i=1$, $\cdots $%
, $N$. Therefore, the considered problem can be solved under the control law
(\ref{37}) and the triggering mechanism (\ref{17}). In addition, by Lemma %
\ref{Lem1}, the Zeno behavior is ruled out for each agent. Thus, the proof
is completed.
\end{proof}

\begin{remark}
\label{Rem5}The distributed internal model in the second equation of (\ref%
{37}) contains the parameter estimate $\hat{\Psi}_{i}$ and the adaptive gain
$K_{i}$, and thus is called a distributed adaptive internal model. Based on
this distributed adaptive internal model, the distributed event-triggered
control input $u_{i}(t)$ is constructed. It should be pointed out that $%
u_{i}(t)$ is piecewise constant, which makes it easy to be implemented in a
zero-order-hold manner. Moreover, under the triggering mechanism (\ref{17}),
agent $i$ updates its control input $u_{i}(t)$ only at its own triggering
time instants, regardless of the triggering time instants of its neighboring
agents. As a consequence, the number of controller updates is reduced for
each agent.
\end{remark}

\begin{remark}
\label{Rem6}It is observed that the control law (\ref{13}) and the
triggering mechanism (\ref{17}) depend on the continuous measurement of $%
e_{vi}$. This requires continuous sensing of relative output or continuous communication of absolute output between neighboring agents. To reduce the
communication load when $e_{vi}$ is accessed via inter-agent communication, a sampling mechanism or an estimation scheme for $e_{vi}$
needs to be developed. Specifically, one possible way is to design a
self-triggered sampling mechanism. In the self-triggered control, the
control law and the triggering mechanism are designed using the sampled
information $e_{vi}(t_{k}^{i})$, instead of the real-time information $%
e_{vi}(t)$. In this case, continuous measurement of neighboring information
can be avoided. Another possible way is to design an estimation scheme in
which each agent is required to continuously estimate its neighboring
information so as to relax the continuous measurement requirement.
\end{remark}

\section{A Simulation Example}

Consider a heterogeneous uncertain linear MAS of the form
\begin{align}
\dot{x}_{i}& =A_{i}(w)x_{i}+B_{i}(w)u_{i}+E_{i}(w)v\text{,}  \notag \\
y_{i}& =C_{i}(w)x_{i}\text{, } i=1\text{, }\cdots \text{, }4\text{,}
\label{38}
\end{align}%
where the system matrices are given as follows: for $i=1$, $2$,
\begin{align*}
A_{i}(w)& =\left[
\begin{array}{ccc}
c_{1i}(w) & 1 & 0 \\
0 & -1 & 1 \\
1 & c_{2i}(w) & c_{3i}(w)%
\end{array}%
\right] \text{,} \\
B_{i}(w)& =\left[
\begin{array}{ccc}
0 & 0 & c_{4i}(w)%
\end{array}%
\right] ^{\mathrm{T}}\text{, } \\
C_{i}(w)& =\left[
\begin{array}{ccc}
0 & 0 & 1%
\end{array}%
\right] \text{, }E_{i}=0_{3\times 2}\text{, }
\end{align*}%
and for $i=3$, $4$,%
\begin{align*}
A_{i}(w)& =\left[
\begin{array}{cccc}
c_{1i}(w) & 0 & 0 & 1 \\
0 & -1 & 0 & 1 \\
0 & 0 & -2 & 1 \\
1 & 1 & c_{2i}(w) & c_{3i}(w)%
\end{array}%
\right] \text{,} \\
B_{i}(w)& =\left[
\begin{array}{cccc}
0 & 0 & 0 & c_{4i}(w)%
\end{array}%
\right] ^{\mathrm{T}}\text{,} \\
C_{i}(w)& =\left[
\begin{array}{cccc}
0 & 0 & 0 & 1%
\end{array}%
\right] \text{, }E_{i}=0_{4\times 2}\text{.}
\end{align*}%
The parameter vector $c_{i}(w)\triangleq \text{col}(c_{1i}(w)$, $c_{2i}(w)$,
$c_{3i}(w)$, $c_{4i}(w))$ can be expressed as $c_{i}(w)=$ $c_{i}(0)+w_{i}$, $%
i=1$, $\cdots$, $4$, with $c_{i}(0)=[c_{1i}(0)$, $c_{2i}(0)$, $c_{3i}(0)$, $%
c_{4i}(0)]^{\mathrm{T}}$ and $w_{i}=[w_{1i}$, $w_{2i}$, $w_{3i}$, $w_{4i}]^{%
\mathrm{T}}\in \mathbb{W\subset R}^{4}$, where $c_{i}(0)$ and $w_{i}$
represent the nominal part and the uncertainty of $c_{i}(w)$, respectively.
The exogeneous signal $v$ is generated by an unknown exosystem of the form
\begin{equation}
\dot{v}=\left[
\begin{array}{cc}
0 & \sigma \\
-\sigma & 0%
\end{array}%
\right] v\text{, }y_{0}=\left[
\begin{array}{cc}
1 & 0%
\end{array}%
\right] v\text{,}  \label{39}
\end{equation}%
with the unknown scalar parameter $\sigma \in \mathbb{S\subset R}$. The
compact subsets $\mathbb{W}$ and $\mathbb{S}$ are allowed to be unknown. The
communication graph among the agents and the exosystem is displayed in Fig. %
\ref{Fig0}. The aim here is to make the tracking errors converge to the origin with any specified level of accuracy, namely, $\lim_{t\rightarrow \infty}| e_{i}(t)| =\lim_{t\rightarrow \infty}|y_{i}(t)-y_{0}(t)|\leq \varepsilon_{i}$, $i=1$, $\cdots $, $4$.

\begin{figure}[!htb]
\centering
\includegraphics[width=1.5in]{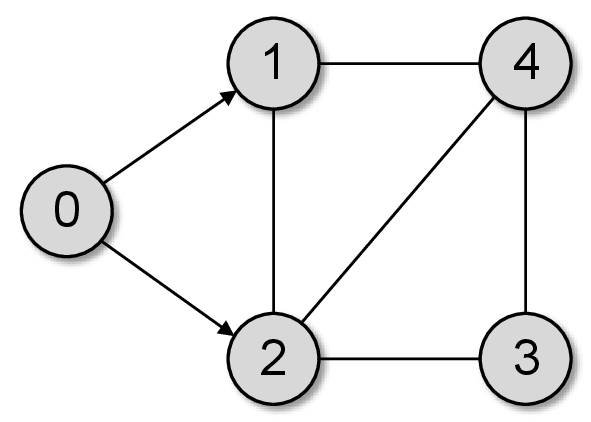}
\caption{Communication graph.}
\label{Fig0}
\end{figure}

By performing $x_{i}=\mathrm{col}(z_{i},\xi _{i})\in \mathbb{R}%
^{n_{i}-1}\times \mathbb{R}$, the MAS (\ref{38}) can be rewritten in the
form of (\ref{1}). Let $c_{1i}(0)=c_{2i}(0)=c_{3i}(0)=-2$, $c_{4i}(0)=2$,
and $c_{1}(w)<0$, for $i=1$, $\cdots $, $4$. It can be verified that
Assumptions \ref{Asmp1}-\ref{Asmp3} hold. Then, by Theorem \ref{Thm1}, the
proposed control strategy is able to solve the robust CORP of the MAS (\ref{38}) and exosystem (\ref{39}). Since $%
S(\sigma )=[0$, $\sigma $; $-\sigma $, $0]$, the minimal polynomial of $%
S(\sigma )$ is $\lambda ^{2}+\sigma ^{2}$. Denote
\begin{equation*}
\Phi (\sigma )=\left[
\begin{array}{cc}
0 & 1 \\
-\sigma ^{2} & 0%
\end{array}%
\right] \text{, }\Gamma =\left[
\begin{array}{cc}
1 & 0%
\end{array}%
\right] \text{.}
\end{equation*}%
Choose the controllable pair $(M$, $Q)$ as%
\begin{equation*}
M=\left[
\begin{array}{cc}
0 & 1 \\
-25 & -10%
\end{array}%
\right] \text{, }Q=\left[
\begin{array}{c}
0 \\
1%
\end{array}%
\right] \text{.}
\end{equation*}%
Solving the Sylvester equation (\ref{7}) yields
\begin{equation*}
T^{-1}(\sigma )=\left[
\begin{array}{cc}
25-\sigma ^{2} & 10 \\
-10\sigma ^{2} & 25-\sigma ^{2}%
\end{array}%
\right] \text{.}
\end{equation*}%
Hence, $\Psi _{\sigma }=\Gamma T^{-1}(\sigma )=[25-\sigma ^{2}$, $10]$.
Denote $\Psi _{\sigma }=[\Psi _{\sigma 1}$, $\Psi _{\sigma 2}]$. Since $%
\sigma $ is unknown, $\Psi _{\sigma 1}=25-\sigma ^{2}$ is also unknown. To
deal with the unknown $\Psi _{\sigma 1}$, an estimator needs to be
constructed for each agent. For agent $i$, $i=1$, $\cdots $, $4$, denote the
estimate of $\Psi _{\sigma 1}$ by $\hat{\Psi}_{1i}$ and the state of the
internal model by $\eta _{i}=[\eta _{1i}$, $\eta _{2i}]^{\mathrm{T}}$. Then,
the proposed distributed ETC law is given by%
\begin{align}
u_{i}(t)& =\hat{\Psi}_{1i}(t_{k}^{i})\eta _{1i}(t_{k}^{i})+10\eta
_{2i}(t_{k}^{i})-K_{i}(t_{k}^{i})e_{vi}(t_{k}^{i})\text{,}  \notag \\
\dot{\eta}_{1i}(t)& =\eta _{2i}(t)\text{,}  \notag \\
\dot{\eta}_{2i}(t)& =-25\eta _{1i}(t)-10\eta _{2i}(t)+\hat{\Psi}%
_{1i}(t_{k}^{i})\eta _{1i}(t_{k}^{i})  \notag \\
& \quad +10\eta _{2i}(t_{k}^{i})-K_{i}(t_{k}^{i})e_{vi}(t_{k}^{i})\text{,}
\notag \\
\dot{\hat{\Psi}}_{1i}(t)& =-\gamma _{i}e_{vi}(t)\eta _{1i}(t)\text{,}  \notag
\\
\dot{K}_{i}(t)& =\delta _{i}e_{vi}^{2}(t)\text{, }  \label{40}
\end{align}%
where $t_{k}^{i}$, $k=0$, $1$, $\cdots $, are determined by the triggering
mechanism (\ref{17}). Without loss of generality, take $t_{0}^{i}=0$, $i=1$,
$\cdots $, $4$.

In the simulation, the actual values of the parameters $\sigma $ and $w_{i}$%
, $i=1$, $\cdots $, $4$, are assumed to be $\sigma =2$, $w_{1}=[0.5$, $1.0$,
$-1.0$, $0.1]^{\mathrm{T}}$, $w_{2}=[-0.5$, $0.5$, $-1.5$, $1.5]^{\mathrm{T}%
} $, $w_{3}=[0.2$, $0.5$, $-0.5$, $1.0]^{\mathrm{T}}$, and $w_{4}=[0.1$, $1.0
$, $-1.0$, $1.5]^{\mathrm{T}}$, and the design parameters are chosen as $%
\gamma _{1}=80$, $\gamma _{2}=\gamma _{3}=\gamma _{4}=10$, $\delta _{i}=5$, $%
\kappa _{i}=0.9$, $\beta _{i}=0.6$, and $\alpha _{i}=1$, $i=1$, $\cdots $, $%
4 $. The simulation results are presented in Figs. \ref{Fig1}-\ref{Fig6},
where the initial condition is chosen as $x_{1}(0)=[-2$, $1$, $-1]^{\mathrm{T%
}}$, $x_{2}(0)=[1$, $-1$, $-2]^{\mathrm{T}}$, $x_{3}(0)=[0$, $2$, $-1$, $2]^{%
\mathrm{T}}$, $x_{4}(0)=[-2$, $2$, $0$, $1]^{\mathrm{T}}$, $v(0)=[0.2$, $1]^{%
\mathrm{T}}$, $\eta _{1}(0)=[-1$, $-2]^{\mathrm{T}}$, $\eta _{2}(0)=[3$, $%
2]^{\mathrm{T}}$, $\eta _{3}(0)=[4$, $6]^{\mathrm{T}}$, $\eta _{4}(0)=[-2$, $%
-4]^{\mathrm{T}}$, $\hat{\Psi}_{1i}(0)=15$, and $K_{i}(0)=10$, $i=1$, $%
\cdots $, $4$. Fig. \ref{Fig1} displays the outputs of the agents and the
exosystem, and Fig. \ref{Fig2} displays the tracking errors of all agents.
From these two figures, it is seen that the output of each agent can
track the output of the exosystem with a high level of accuracy. Shown in Fig. \ref{Fig3}
is the control input of agent $1$. It is observed that the control input of
agent $1$ is piecewise constant, which illustrates the viewpoint in Remark %
\ref{Rem5}. The triggering time instants for updating the control input of
agent $1$ are shown in Fig. \ref{Fig4}. The inter-event times of agents $1$%
, $2$, $3$, $4$ over the time interval $[0,4]$ \textrm{s} are shown in
Fig. \ref{Fig5}. Moreover, the numbers of controller updates in the initial $%
4$ \textrm{s} for agents $1$, $2$, $3$, $4$ are $100$, $445$, $185$, and $%
275$, respectively.

\begin{figure}[!htb]
\centering
\includegraphics[width=3.5in]{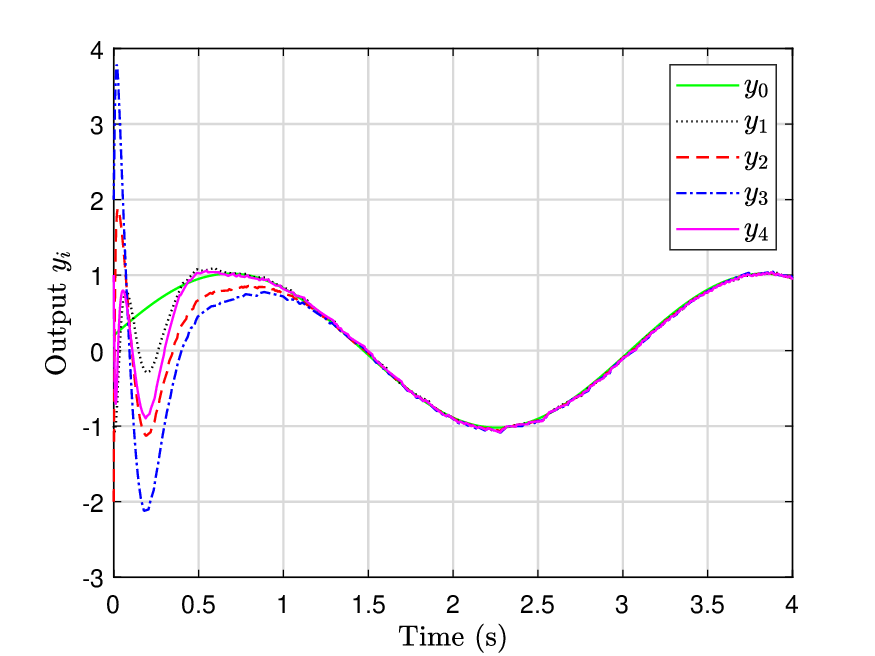}
\caption{Outputs of the agents and the exosystem.}
\label{Fig1}\vspace{-0.5em}
\end{figure}

\begin{figure}[!htb]
\centering
\includegraphics[width=3.5in]{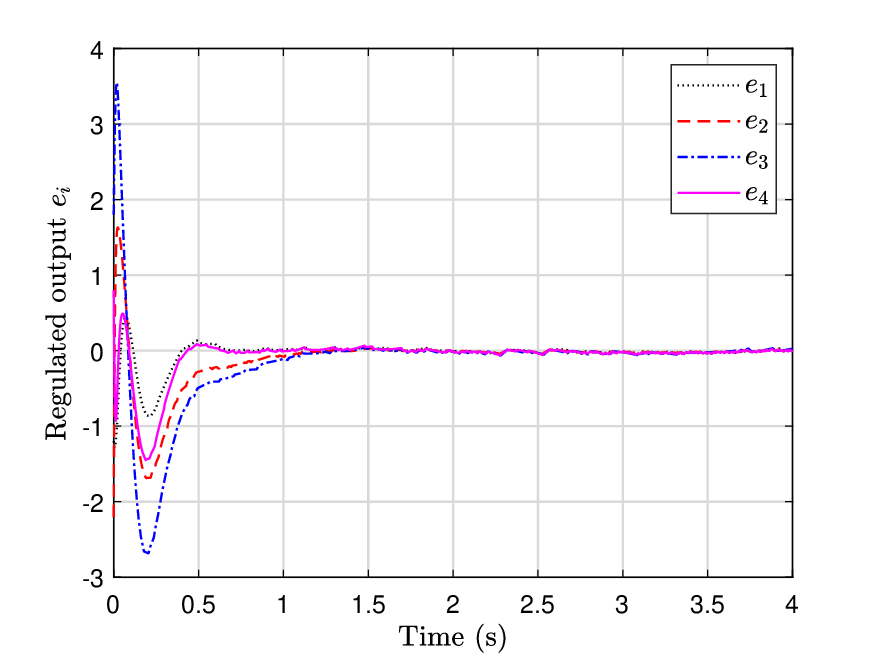}
\caption{Tracking errors of the agents.}
\label{Fig2}
\end{figure}

\begin{figure}[!htb]
\centering
\includegraphics[width=3.5in]{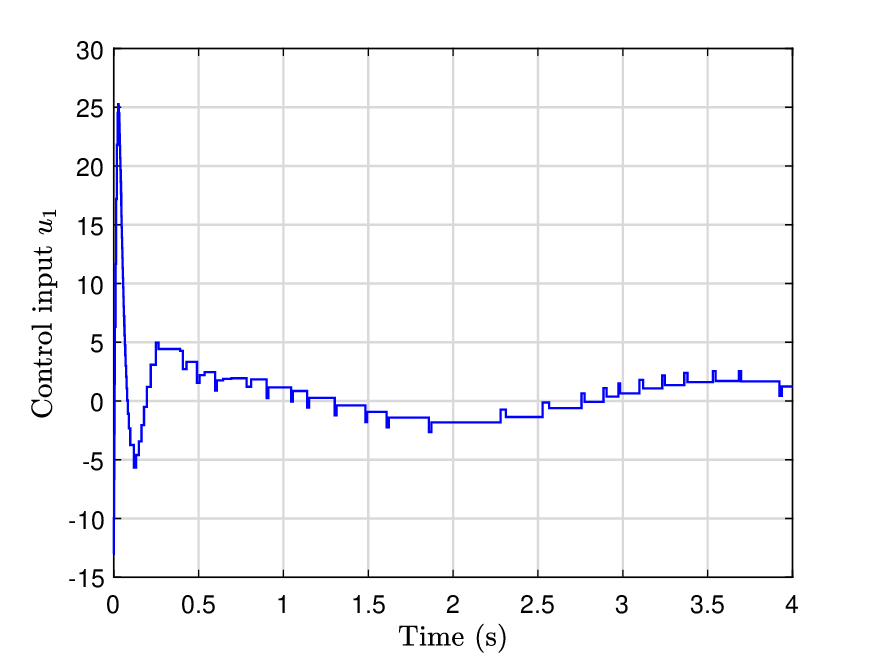}
\caption{Control input of agent $1$.}
\label{Fig3}
\end{figure}

\begin{figure}[!htb]
\centering
\includegraphics[width=3.5in]{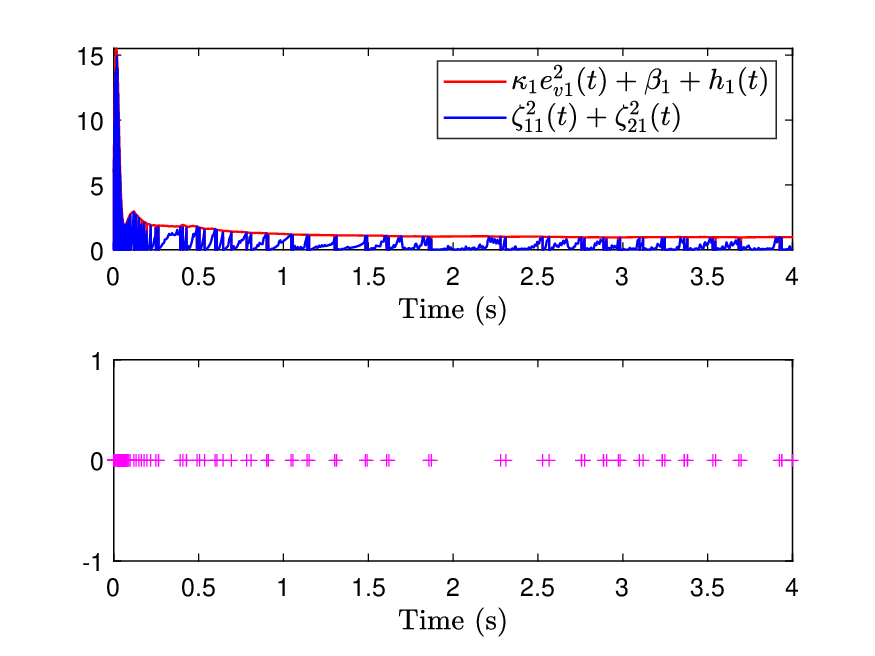}
\caption{Total measurement error, triggering threshold, and triggering time
instants of agent $1$.}
\label{Fig4}
\end{figure}

\begin{figure}[!htb]
\centering
\includegraphics[width=3.5in]{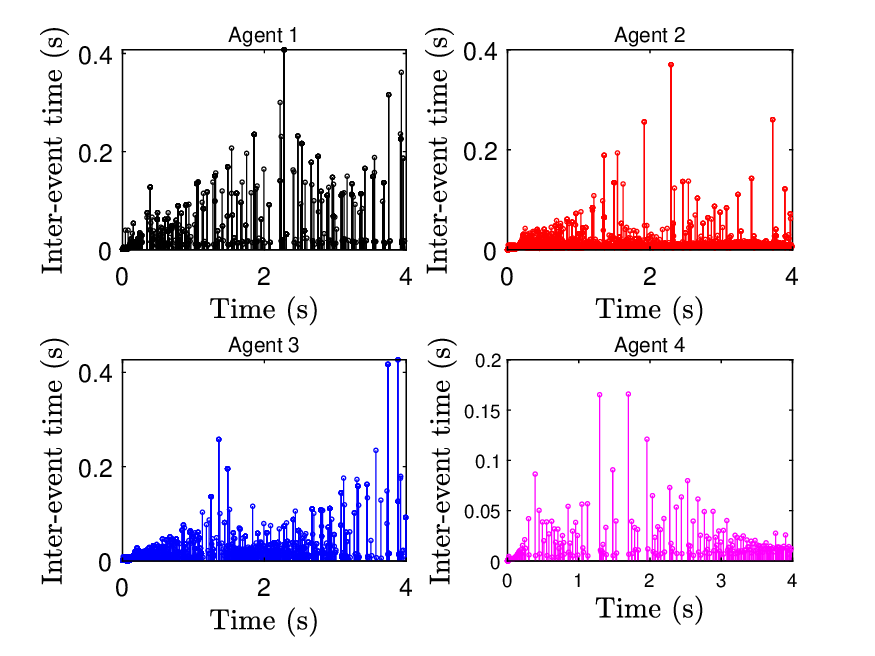}
\caption{Inter-event times of the agents.}
\label{Fig5}
\end{figure}

To illustrate the results in Lemma \ref{Lem1}, the adaptive gains of all
agents are shown in Fig. \ref{Fig6}, and the trigger variables of all agents are presented in Fig. %
\ref{Fig7}. It can be seen that the adaptive
gains are continuous and bounded and the trigger variables are positive, as shown in Lemma \ref{Lem1}.
\begin{figure}[!htb]
\centering
\includegraphics[width=3.5in]{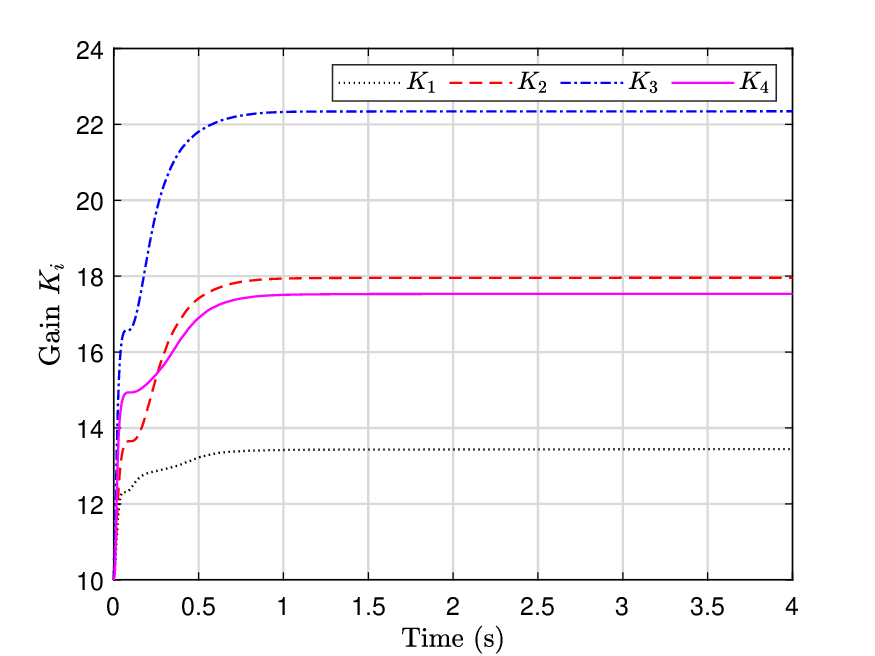}
\caption{Adaptive gains of the agents.}
\label{Fig6}
\end{figure}

\begin{figure}[!htb]
\centering
\includegraphics[width=3.5in]{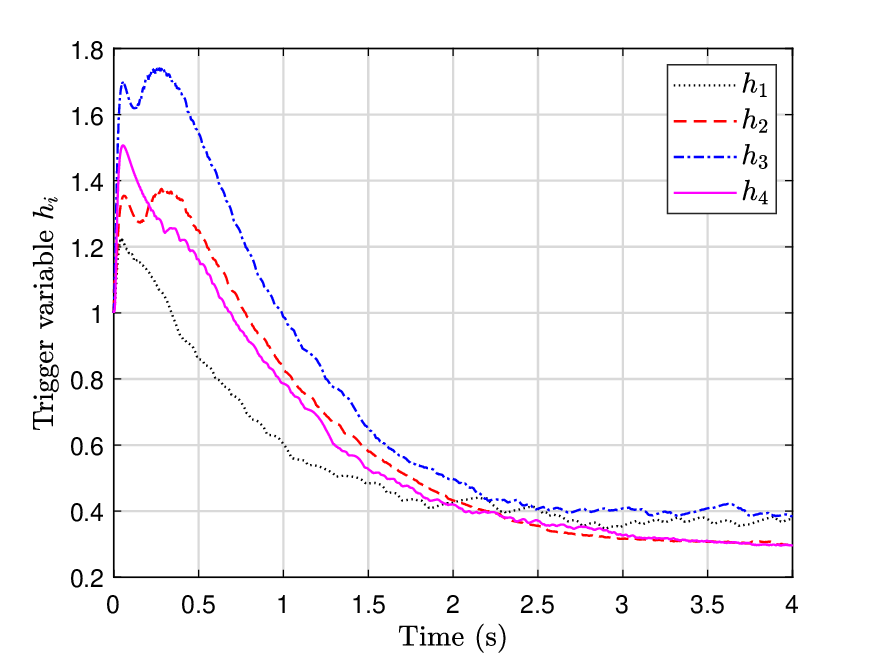}
\caption{Trigger variables of the agents.}
\label{Fig7}
\end{figure}

Next, a performance comparison is made between the proposed ETC strategy and
the typical periodic sampling control strategy. For this purpose, these two
control strategies are carried out under the same average frequency of
controller updates. It is obtained in the above simulation that under the
proposed ETC strategy, the average inter-event times for agents $1$, $2$, $3$%
, $4$ are $0.0398$, $0.0090$, $0.0090$, and $0.0090$ \textrm{s}, respectively. When the same periodic
sampling periods are respectively chosen for agents $1$, $2$, $3$, $4$, the
periodic sampling control strategy would render the closed-loop system
unstable. This illustrates that the proposed ETC strategy has a better
performance in reducing the number of controller updates than the
corresponding periodic sampling control strategy.

Finally, a comparative analysis is conducted between the proposed dynamic
triggering mechanism and the corresponding static triggering mechanism of
the form (\ref{34}). Under the static triggering mechanism (\ref{34}), the
numbers of controller updates within the initial $4$ \textrm{s} for agents $1$, $%
2$, $3$, $4$ are $127 $, $419$, $251$, and $291$, respectively, resulting in a total count of $1088$. In contrast, the total count of controller updates under the proposed dynamic triggering mechanism amounts to $1005$. This comparative analysis substantiates the efficacy of the proposed dynamic triggering mechanism in reducing the frequency of controller updates compared to its static counterpart.

\section{Conclusions\label{Sec5}}

In this paper, the robust CORP has been addressed for a class of
heterogeneous uncertain linear MASs with an unknown exosystem via ETC. By
constructing a distributed adaptive internal model, a fully distributed ETC
strategy is proposed for each agent. It is shown that under the proposed ETC
strategy, the tracking errors converge to the origin with any specified level of accuracy whilst the
Zeno behavior is ruled out. In future work, the same problem will be studied
for more general agent dynamics or even nonlinear MASs.


\bibliographystyle{elsarticle-num}
\bibliography{mybib6}

\begin{thebibliography}{10}
\expandafter\ifx\csname url\endcsname\relax
  \def\url#1{\texttt{#1}}\fi
\expandafter\ifx\csname urlprefix\endcsname\relax\def\urlprefix{URL }\fi
\expandafter\ifx\csname href\endcsname\relax
  \def\href#1#2{#2} \def\path#1{#1}\fi

\bibitem{Astrom1999}
K.~J. Astr{\"o}m, B.~Bernhardsson, Comparison of periodic and event based
  sampling for first-order stochastic systems, in: Proceedings of the 14th IFAC
  World Congress, 1999, pp. 301--306.

\bibitem{meng2012optimal}
X.~Meng, T.~Chen, Optimal sampling and performance comparison of periodic and
  event based impulse control, IEEE Transactions on Automatic Control 57~(12)
  (2012) 3252--3259.

\bibitem{lehmann2012comparison}
D.~Lehmann, J.~Lunze, K.~H. Johansson, Comparison between sampled-data control,
  deadband control and model-based event-triggered control, IFAC Proceedings
  Volumes 45~(9) (2012) 7--12.

\bibitem{Tabuada2007}
P.~Tabuada, Event-triggered real-time scheduling of stabilizing control tasks,
  IEEE Transactions on Automatic Control 52~(9) (2007) 1680--1685.

\bibitem{girard2015dynamic}
A.~Girard, Dynamic triggering mechanisms for event-triggered control, IEEE
  Transactions on Automatic Control 60~(7) (2015) 1992--1997.

\bibitem{Dimarogonas2012}
D.~V. Dimarogonas, E.~Frazzoli, K.~H. Johansson, Distributed event-triggered
  control for multi-agent systems, IEEE Transactions on Automatic Control
  57~(5) (2012) 1291--1297.

\bibitem{Fan2013}
Y.~Fan, G.~Feng, Y.~Wang, C.~Song, Distributed event-triggered control of
  multi-agent systems with combinational measurements, Automatica 49~(2) (2013)
  671--675.

\bibitem{Fan2015}
Y.~Fan, L.~Liu, G.~Feng, Y.~Wang, Self-triggered consensus for multi-agent
  systems with \mbox{Z}eno-free triggers, IEEE Transactions on Automatic
  Control 60~(10) (2015) 2779--2784.

\bibitem{yi2018dynamic}
X.~Yi, K.~Liu, D.~V. Dimarogonas, K.~H. Johansson, Dynamic event-triggered and
  self-triggered control for multi-agent systems, IEEE Transactions on
  Automatic Control 64~(8) (2019) 3300--3307.

\bibitem{zhang2015distributed}
H.~Zhang, R.~Yang, H.~Yan, Q.~Chen, Distributed event-triggered control for
  consensus of multi-agent systems, Journal of the Franklin Institute 352~(9)
  (2015) 3476--3488.

\bibitem{Zhu2014}
W.~Zhu, Z.-P. Jiang, G.~Feng, Event-based consensus of multi-agent systems with
  general linear models, Automatica 50~(2) (2014) 552--558.

\bibitem{almeida2017synchronization}
J.~Almeida, C.~Silvestre, A.~M. Pascoal, Synchronization of multi-agent systems
  using event-triggered and self-triggered broadcasts, IEEE Transactions on
  Automatic Control 62~(9) (2017) 4741--4746.

\bibitem{hao2022event}
Y.~Hao, L.~Liu, Event-triggered ${H}_\infty$ output consensus of heterogeneous
  linear multi-agent systems, Journal of the Franklin Institute 359~(16) (2022)
  9056--9078.

\bibitem{hao2023CyberEL}
Y.~Hao, L.~Liu, Event-triggered consensus of uncertain {E}uler–{L}agrange
  multiagent systems over jointly connected digraphs, IEEE Transactions on
  Cybernetics (2023) 1--12, doi: 10.1109/TCYB.2023.3282703.

\bibitem{qian2018output}
Y.~Qian, L.~Liu, G.~Feng, Output consensus of heterogeneous linear multi-agent
  systems with adaptive event-triggered control, IEEE Transactions on Automatic
  Control 64~(6) (2019) 2606--2613.

\bibitem{qian2018distributed}
Y.~Qian, L.~Liu, G.~Feng, Distributed event-triggered adaptive control for
  consensus of linear multi-agent systems with external disturbances, IEEE
  Transactions on Cybernetics 50~(5) (2020) 2197--2208.

\bibitem{zhu2018fully}
W.~Zhu, Q.~Zhou, D.~Wang, G.~Feng, Fully distributed consensus of second-order
  multi-agent systems using adaptive event-based control, Science China
  Information Sciences 61~(12) (2018) 129201:1--129201:3.

\bibitem{qian2021design}
Y.~Qian, Y.~Wan, Design of distributed adaptive event-triggered consensus
  control strategies with positive minimum inter-event times, Automatica 133
  (2021) 109837.

\bibitem{qian2019distributed}
Y.~Qian, L.~Liu, G.~Feng, Distributed dynamic event-triggered control for
  cooperative output regulation of linear multiagent systems, IEEE Transactions
  on Cybernetics 50~(7) (2020) 3023--3032.

\bibitem{yang2018distributed}
R.~Yang, H.~Zhang, G.~Feng, H.~Yan, Distributed event-triggered adaptive
  control for cooperative output regulation of heterogeneous multiagent systems
  under switching topology, IEEE Transactions on Neural Networks and Learning
  Systems 29~(9) (2018) 4347--4358.

\bibitem{yang2019robust}
R.~Yang, H.~Zhang, G.~Feng, H.~Yan, Z.~Wang, Robust cooperative output
  regulation of multi-agent systems via adaptive event-triggered control,
  Automatica 102 (2019) 129--136.

\bibitem{hu2017cooperativeCyber}
W.~Hu, L.~Liu, Cooperative output regulation of heterogeneous linear
  multi-agent systems by event-triggered control, IEEE Transactions on
  Cybernetics 47~(1) (2017) 105--116.

\bibitem{hu2017cooperativeTAC}
W.~Hu, L.~Liu, G.~Feng, Cooperative output regulation of linear multi-agent
  systems by intermittent communication: A unified framework of time- and
  event-triggering strategies, IEEE Transactions on Automatic Control 63~(2)
  (2018) 548--555.

\bibitem{hu2018robust}
W.~Hu, L.~Liu, G.~Feng, Robust cooperative output regulation of heterogeneous
  uncertain linear multi-agent systems by intermittent communication, Journal
  of the Franklin Institute 355~(3) (2018) 1452--1469.

\bibitem{liu2017event}
W.~Liu, J.~Huang, Event-triggered cooperative robust practical output
  regulation for a class of linear multi-agent systems, Automatica 85 (2017)
  158--164.

\bibitem{meng2018event}
X.~Meng, L.~Xie, Y.~C. Soh, Event-triggered output regulation of heterogeneous
  multiagent networks, IEEE Transactions on Automatic Control 63~(12) (2018)
  4429--4434.

\bibitem{hao2023swithcing}
Y.~Hao, L.~Liu, G.~Feng, Event-triggered cooperative output regulation of
  heterogeneous multiagent systems under switching directed topologies, IEEE
  Transactions on Cybernetics 53~(2) (2023) 1026--1038.

\bibitem{qian2021cooperative}
Y.~Qian, L.~Liu, G.~Feng, Cooperative output regulation of linear multiagent
  systems: An event-triggered adaptive distributed observer approach, IEEE
  Transactions on Automatic Control 66~(2) (2021) 833--840.

\bibitem{hao2022fully}
Y.~Hao, J.~Zhang, L.~Liu, Fully distributed event-triggered cooperative output
  regulation of multi-agent systems under jointly connected digraphs, IEEE
  Transactions on Automatic Control 68~(7) (2023) 4241--4248.

\bibitem{qian2021event}
Y.~Qian, L.~Liu, G.~Feng, Event-triggered robust output regulation of uncertain
  linear systems with unknown exosystems, IEEE Transactions on Systems, Man,
  and Cybernetics: Systems 51~(7) (2021) 4139--4148.

\bibitem{liu2020dynamic}
P.~Liu, F.~Xiao, B.~Wei, Dynamic event-triggered adaptive control for robust
  output regulation of nonlinear systems with unknown exosystems,
  IFAC-PapersOnLine 53~(2) (2020) 2808--2813.

\bibitem{chen2015stabilization}
Z.~Chen, J.~Huang, Stabilization and Regulation of Nonlinear Systems: A Robust
  and Adaptive Approach, Switzerland: Springer, 2015.

\bibitem{godsil2013algebraic}
C.~Godsil, G.~Royle, Algebraic Graph Theory, New York, NY, USA:
  Springer-Verlag, 2001.

\bibitem{su2012cooperativeTAC}
Y.~Su, J.~Huang, Cooperative output regulation of linear multi-agent systems,
  IEEE Transactions on Automatic Control 57~(4) (2012) 1062--1066.

\bibitem{li2016distributed}
Z.~Li, M.~Z. Chen, Z.~Ding, Distributed adaptive controllers for cooperative
  output regulation of heterogeneous agents over directed graphs, Automatica 68
  (2016) 179--183.

\bibitem{su2014cooperative}
Y.~Su, J.~Huang, Cooperative robust output regulation of a class of
  heterogeneous linear uncertain multi-agent systems, International Journal of
  Robust and Nonlinear Control 24~(17) (2014) 2819--2839.

\bibitem{huang2004nonlinear}
J.~Huang, Nonlinear Output Regulation: Theory and Applications, Philadelphia,
  PA, USA: SIAM, 2004.

\bibitem{Khalil2002}
H.~K. Khalil, Nonlinear Systems, 3rd Edition, Upper Saddle River, NJ, USA:
  Prentice-Hall, 2002.

\end{thebibliography}

\end{document}